\RequirePackage{amsmath}
\documentclass[runningheads]{llncs}
\usepackage[letterpaper,hmargin=1.455in,vmargin=1.3in]{geometry}
\usepackage{amssymb,amstext}
\usepackage[T1]{fontenc}
\usepackage{slantsc}
\usepackage{graphicx}

\usepackage{anyfontsize}

\usepackage{hyperref}
\usepackage{physics}
\usepackage{mathtools}
\usepackage{tikz}
\usepackage{float}
\usetikzlibrary{arrows, positioning, shadows}
\usepackage[ruled,vlined]{algorithm2e}

\usepackage{xspace}

\usepackage{tabularx} 
\usepackage{makecell}

\newcommand{\Z}{\mathbb Z}

\newcommand{\F}{\mathbb F}
\newcommand{\C}{\mathbb C}
\newcommand{\Q}{\mathbb  Q}
\newcommand{\p}{\mathbb P}

\numberwithin{equation}{section}
\usepackage[new]{old-arrows}

\usepackage{color}

\begin{document}
%
\title{Arithmetic circuit lower bounds from sumset expansion}
\titlerunning{Arithmetic circuit lower bounds from sumset expansion}
%
\author{Anand Kumar Narayanan} 
%
%
\authorrunning{Anand Kumar Narayanan}
%
\institute{\email{anandkumar.n@gmail.com}}
%
%
\maketitle              
\begin{abstract}
Raz proposed a program to prove arithmetic circuit lower bounds through the explicit construction of \textit{elusive functions} \cite{raz}. 
These are polynomial maps from a low dimensional space to a high dimensional ambient space whose image is contained in no subvariety of low complexity. 
Here, complexity is prescribed in terms of the dimension and degree of parametric maps into the ambient space defining the subvariety.   
Elusive functions are abundant: finding explicit ones with parameters typical of generic polynomial maps implies Valiant's hypothesis that VP\(\neq\)VNP.
But no such construction is known. 
Raz devised elusive functions with weaker parameters to derive explicit degree \(d\) polynomials in \(n\) variables requiring superlinear circuit size at depth \(d=o(\log n)\).\\ 

We present a new method to analyse and construct elusive functions, with coordinate maps restricted to monomials. To prove elusiveness, we identify a hitting set of points, each a tuple of roots of unity coupled based on the exponents of the monomial maps.
Using Chebotarev's theorem on roots of unity, we show that for every low complexity subvariety, the function evaluated at some point in the hitting set eludes it. 
For this strategy to work, it suffices that the iterated sumset of a certain set of numbers (derived from the exponents) expands exponentially.
We thus reduce open explicit construction problems in elusive functions (entailing complicated polynomial constraints) to purely additive combinatorial ones, whose resolutions imply as yet unknown lower bounds. \\

Informed by iterated sumset expansion, we devise new elusive functions. 
We construct explicit elusive curves of exponential degree, resolving an open problem posed by Garg, Makam, Oliveira, and Wigderson \cite{gmow} as a testament to the difficulty of elusiveness proofs. 
The exponential degree obstructs the deduction of lower bounds, but even lowering it to  subexponential would imply VP\(\neq\)VNP. 
We improve Raz's superlinear bound quadratically (with circuit size to input size ratio as the metric) below \(o(\log n/\log\log n)\) depths. The degree of our explicit polynomial is comparable to Raz's at the deep end of \(o(\log n/\log\log n)\), but much worse at shallower depths. Our methods can also prove rigidity of symbolic matrices, high-rank of symbolic tensors, and more generally find points outside algebraic natural proofs.  
As an example, we present semi-explicit high border rank tensors over smaller degree number fields than previously known.
%
%
\end{abstract}
\section{Introduction}\label{section-introduction}
\subsection{Arithmetic circuits}
We consider the model of arithmetic circuits as in \cite[\S~1.1]{raz} and refer to  \cite{vzg} for a broader discussion. We only consider arithmetic circuits over the complex field. Such a circuit is built on a directed acyclic graph. The vertices of in-degree zero are called leaves and are labeled with either a variable or the constant one. The other vertices are either labeled as a sum gate or a product gate, and perform the corresponding operation on its incoming edges. We do not restrict the fan-in (in-degree) to the gates. Each edge is labeled with a constant and the number of edges is the size of the arithmetic circuit. The length of the longest directed path is the depth. The vertices of out-degree zero are called as the outputs. Each output is a polynomial in the input variables. A fundamental question is to explicitly construct polynomials that require  arithmetic circuits of a certain size and depth. Explicit  means poly-definable/poly(n)-definable in the sense of Valiant \cite{val-pdef,val-red}. Informally, a polynomial is poly(n)-definable if there is a polynomial (in the number of inputs/variables \(n\)) sized circuit to uniformly compute its coefficients (see definition \ref{definition-polydefinable} or \cite[\S~1.5]{raz}, for alternate characterizations see \cite[Thm. 4.2]{vzg}).   
The central problem in the complexity of arithmetic circuits is Valiant's hypothesis VP\(\ne\)VNP, that there are poly(n)-definable polynomials without polynomial sized circuits \cite{val-pdef,val-red}. 
Raz devised an enticing program to prove VP\(\ne\)VNP, through polynomial maps that he called elusive functions \cite{raz}. He showed how to construct from explicit elusive functions (see definitions \ref{definition-elusive} and \ref{definition-explicit-elusive}) with strong enough parameters, poly(n)-definable polynomials that require superpolynomial sized circuits. Therefore, explicit elusive functions imply VP\(\ne\)VNP. 
Elusive functions are abundant, with a generic polynomial map being elusive with strong parameters. Yet, there are few known techniques to prove elusiveness, making explicit construction a notorious ``hay in a haystack'' problem. We present new techniques to prove elusiveness of monomial functions. Informed by the proof technique, we construct new monomial elusive functions. These are not strong enough for superpolynomial lower bounds: but either improve marginally on the state of the art polynomial lower bounds at low depths or reduce lower bound problems to additive combinatorics. 
\subsection{Elusive curves and Chebotarev's other theorem}
    Call a function \(\Gamma:\C^s\longrightarrow\C^m\) a degree-r map if its image \(\Gamma(\C^s)\) has a parametric equation of the form 
     \((z_1,z_2,\ldots,z_s)\longmapsto \left(g_i(z_1,z_2,\ldots,z_s)\right)_{i\in[m]},\) 
    where each \(g_i(z_1,z_2,\ldots,z_s)\) is a polynomial in \(\C[z_1,z_2,\ldots,z_s]\) of degree at most \(r\). When \(s\) (in comparison to \(m\)) and \(r\) are small, we informally call such \(\Gamma\) low complexity maps. 
\begin{definition}\label{definition-elusive}
    A function \(f:\C^n\longrightarrow\C^m\) is defined to be \textbf{\((s,r)\)-elusive} if \(f(\C^n) \nsubseteq \Gamma(\C^s)\) for every degree-r map \(\Gamma:\C^s \longrightarrow \C^m\). 
\end{definition}    
Raz noted that \(z\longmapsto \left(z,z^2,z^3,\ldots,z^{m}\right)\)
is \((m-1,1)\)-elusive (the image is called the moment curve), since the images of \(m\) distinct non zero values of \(z\) do not all lie on a hyperplane, as evident from the non-singularity of the Vandermonde matrix associated with those \(m\) points on the curve. He then asked for an explicit  construction of an \((m-1,2)\)-elusive curve, proving that such elusive curves with the coordinates being degree-\(2^{o(m)}\) maps implies VP\(\neq\)VNP. The demands of this explicit construction can be eased by lowering the dimension of the low complexity maps and increasing the dimension of the elusive functions, for instance \((m^{9/10},2)\)-elusive functions with \(m=n^{\omega(1)}\) suffice. 
Garg, Makam, Oliveira, and Wigderson note how little progress has been made since on the construction of elusive functions: ``\textit{This beautiful avenue to proving superpolynomial lower bounds is a great challenge to our techniques, and no progress we know of was made since that paper came out. Here we will attempt to handle a very toy version of it using our numeric to symbolic transfer}'' \cite[\S~9]{gmow}. They asked if 
\(z\longmapsto (z,z^3,z^9,\ldots,z^{3^{m-1}})\) is \((m-1,2)\)-elusive (or even \((m^{9/10},2)\)-elusive), in hopes that allowing exponential degree maps would help prove elusiveness. But their numeric to symbolic transfer of algebraic relations to algebraic functions could only prove a much weaker notion of elusiveness, with severe restrictions on the low complexity maps that need eluding \cite[Prop. 9.8]{gmow}. 
We give elementary proofs that such curves (with even smaller powers) are \((m^{\lfloor9/10\rfloor},2)\)-elusive, deduced from the additive structure of the exponents. The curve \(z\longmapsto (z,z^2,z^4,\ldots,z^{2^{m-1}})\) is not \((m-1,2)\)-elusive, being contained in the hypersurface \(\Gamma(\C^{m-1})\) with a degree-\(2\) map \((z_1,z_2,\ldots,z_{m-1})\xmapsto{\Gamma} (z_1,z_1^2,z_2\ldots,z_{m-1})\) (see remark \ref{remark-low-degree-annihilators}). 
\begin{theorem}\label{theorem-elusive-curve-intro}
    For large enough \(m\), \(z\longmapsto (z,z^2,z^4,\ldots,z^{2^{m-1}})\) is \((\lfloor m^{9/10}\rfloor,2)\)-elusive.
\end{theorem}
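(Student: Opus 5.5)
Let \(s=\lfloor m^{9/10}\rfloor\) and suppose, for contradiction, that the curve \(C=\{(z,z^2,\ldots,z^{2^{m-1}})\}\) lies in \(\Gamma(\C^s)\) for some degree-\(2\) map \(\Gamma=(g_1,\ldots,g_m)\). The plan is to find an annihilating polynomial of the image of \(\Gamma\), use it to build a nonzero univariate polynomial in \(z\) that vanishes on \(C\), and reach a contradiction by evaluating at well-chosen roots of unity.

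\textbf{Annihilator by counting.} The image \(\Gamma(\C^s)\) has dimension at most \(s\). Take monomials of degree at most \(D\) in a small subset of \(k\) coordinates \(y_{i_1},\ldots,y_{i_k}\). There are \(\binom{D+k}{k}\) of them. After composition with \(\Gamma\) they lie in the space of polynomials of degree at most \(2D\) in \(s\) variables, which has dimension \(\binom{2D+s}{s}\). Choose \(k\) somewhat larger than \(s\), say \(k\approx 2s\), and \(D\) large. Then \(\binom{D+k}{k}\gg\binom{2D+s}{s}\), so linear algebra gives a nonzero polynomial \(P\) of degree at most \(D\) in those \(k\) coordinates with \(P\circ\Gamma=0\). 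Because \(C\subseteq\Gamma(\C^s)\), we get \(P(z^{2^{i_1}},\ldots,z^{2^{i_k}})=0\) identically in \(z\).

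\textbf{The contradiction.} \(P\) has at most \(\binom{D+k}{k}\) monomials. A monomial \(\prod_j y_{i_j}^{a_j}\) maps to \(z^{\sum_j a_j2^{i_j}}\). The exponents lie in the iterated sumset \(D\cdot\{2^{i_1},\ldots,2^{i_k}\}\). For \(P\) to vanish after substitution, distinct monomials must collide, that is, two distinct vectors \(a\ne a'\) must give \(\sum_j a_j2^{i_j}=\sum_j a'_j2^{i_j}\). To rule this out, choose the indices \(i_j\) spaced far apart, for instance \(i_j=j\lceil\log_2(D+1)\rceil\). Then the sums are base-\(2^{\lceil\log(D+1)\rceil}\) expansions with digits at most \(D\), so they are all distinct. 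Hence the sumset expands fully, distinct monomials give distinct powers of \(z\), the substitution is nonzero, and we have our contradiction. This is where the paper's Chebotarev and roots-of-unity hitting set would enter in general; for powers of \(2\) plain digit uniqueness should suffice. Evaluating at suitable roots of unity of order exceeding the maximal exponent reproduces this, via Chebotarev minors of the Fourier matrix, if one wants the hitting-set form.

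\textbf{Choosing the parameters.} The constraints are \(k\lceil\log_2(D+1)\rceil\le m-1\) together with the counting inequality \(\binom{D+k}{k}>\binom{2D+s}{s}\). Taking \(k=2s+1\) and \(D\) polynomial in \(m\) (say \(D=m^{c}\)), the counting inequality becomes roughly \((D/k)^{k}>(2D/s)^{s}\), and this holds once \(D\gg s\). The spacing constraint then needs \(s\log m\lesssim m\). This is comfortably true for \(s=m^{9/10}\) and large \(m\).

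\textbf{Main obstacle.} I expect the hardest step to be the counting: verifying \(\binom{D+k}{k}>\binom{2D+s}{s}\) with \(k\log D\le m\), and handling the constants carefully. A degree-\(2\) map could in principle have an image that is dense of dimension \(s\) but with annihilators forced to have large degree. The dimension count above handles this uniformly, but the constants need checking. The fallback is to take \(D\approx m^{1/20}\) and \(k\approx m^{19/20}\).
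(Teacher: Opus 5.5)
Your proof is correct, but it builds the annihilator differently from the paper. The paper uses all \(m\) coordinates and the \(\binom{m}{\ell}\) multilinear monomials of degree exactly \(\ell=\lfloor s/2\rfloor\). Distinctness of the exponents \(\sum_i a_i2^{i-1}\) is then plain binary uniqueness, and the count \(\binom{m}{\ell}>4^s>\binom{s+2\ell}{2\ell}\) draws its strength from the large number \(m\) of coordinates.

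You instead take only \(k=2s+1\) lacunary coordinates \(z^{2^{jL}}\), with \(L=\lceil\log_2(D+1)\rceil\), and all (non-multilinear) monomials of degree at most \(D\). Distinctness is then base-\(2^L\) uniqueness with digits at most \(D\), and the count draws its strength from the degree \(D\). Your numbers check out. The bound \(\binom{D+k}{k}\ge (D/k)^k\) exceeds \(\binom{2D+s}{s}\le (e(2D+s)/s)^s\) once \(D\) is a suitable constant multiple of \(s\) (for instance \(D=m\) is plenty). Then \(kL\approx 2s\log_2 m\le m-1\) for large \(m\).

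The trade-off is the \(\log m\) factor lost to spacing. Your scheme as set up reaches only \(s=O(m/\log m)\), while the paper's multilinear count works for \(s\) up to a small constant fraction of \(m\); both easily cover \(m^{9/10}\).

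You are also right that Chebotarev is not needed for this statement. In either argument \(h\circ f\) is a nonzero polynomial in \(z\), because distinct monomials land on distinct powers of \(z\). What the paper's Chebotarev step buys is a sparsity-based hitting set: any \(\binom{m}{\ell}\) of the points \(\zeta^b\) already contain a witness. That is far fewer points than the degree (up to about \(2^m\)) of \(h\circ f\) would demand, and it is the template the paper reuses in later sections.

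One correction: your stated fallback \(D\approx m^{1/20}\), \(k\approx m^{19/20}\) would fail. For \(D\ll s\), the inequality \(\binom{D+k}{D}>\binom{2D+s}{2D}\) needs roughly \(k\gtrsim s^2/D\approx m^{7/4}\), which exceeds \(m\). Your main choice \(D\gg s\) makes this fallback unnecessary.
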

Any significant lowering of the degree in theorem \ref{theorem-elusive-curve-intro}, such as to a subexponential \(2^{m^{o(1)}}\) implies VP\(\ne\)VNP. 
Two motifs from the proof of theorem \ref{theorem-elusive-curve-intro} will be abstracted and used repeatedly. First, the set of exponents \(A=\{1,2,4,\ldots,2^{m-1}\}\) defining the curve has the property that the iterated sumsets \(A+A+\ldots+A\) grow exponentially in size, evident from the uniqueness of binary expansion. 
Our broader plan is to study and construct elusive monomial functions \(f:\C^n\longrightarrow\C^m\) where each of the \(m\) coordinate maps is a monomial. The expansion of the iterated sumsets of certain sets crafted from the monomial exponents will play a pivotal role in the analysis of such monomial maps. Second, we will look for the images of points with \(p\)-th roots of unity (for a large prime \(p\)) coordinates to elude low complexity maps. In the proof of theorem \ref{theorem-elusive-curve-intro}, simply the images of a \(p\)-th root of unity and its conjugates are enough. 
Later (for lemma \ref{lemma-elusive}, theorems \ref{theorem-reduction-intro} and \ref{theorem-rigidity-intro}), images of tuples of \(p\)-th roots of unity that are intricately coupled by a relation derived from the monomial exponents are needed. To prove that such images are outside low complexity maps, we invoke Chebotarev's theorem on roots on unity, which asserts that every minor of a prime order discrete Fourier matrix is non zero (see \cite{sl} for Chebotarev's \(p\)-adic proof). 
That is, 
\begin{center}
    \textit{ For every prime \(p\), every primitive \(p\)-th root of unity \(\zeta\in \C\), and every two non empty subsets \(A,B \subseteq \{0,1,\ldots,p-1\}\)} of the same size, \(\det\left(\left(\zeta^{ab}\right)_{a \in A,b\in B}\right)\neq 0\).
\end{center}
In essence, we recast the role played by the structure theorem of Vandermonde matrices with Chebotarev's theorem, to elude higher complexity maps. 
The analogous statement of Chebotatev's theorem is false in positive characteristic, hence our claims are only on arithmetic circuits over \(\C\).  The bounds also hold for arithmetic circuits over subfields of \(\C\), such as \(\Q\) or other number fields, since our elusive functions will have zero/one coefficients and a function being elusive over a field implies bounds over all subfields \cite{raz}. See remark \ref{remark-positive-characteristic} on how to extend our results to fields of very large positive characteristic.   
\subsection{Expansion of iterated sumsets implies elusiveness}\label{subsection-intro-expansion-elusive}
We next describe the connection between elusive monomial maps and sumset expansion. 
    For a finite non empty set \(S \subset \Z_{\ge 0}\) and positive integers \(k,\ell\), define the \(k\)-fold sumset
    \[kS:=\underbrace{S+S+\ldots+S}_{k \text{ copies}}\ \ = \bigcup_{(s_1,s_2,\ldots,s_k)\in S^k}(s_1+s_2+\ldots+s_k) \] 
    and let 
    \(\ell^\le S:= \bigcup_{k=1}^\ell kS.\) 
Since the empty summation is not allowed, \(0\in \ell^\le S\) only if \(0 \in S\). For a positive integer \(a\), let \([a]\) denote \(\{1,2,\ldots,a\}\). To prove elusiveness of a function \(f:\C^n\longrightarrow\C^m\) mapping 
\[(z_1, z_2,\ldots, z_n) \longmapsto \left(\prod_{j\in [n]}z_j^{\alpha_{i,j}}\right)_{i\in [m]}, \]
we establish purely additive sufficient conditions on its  \(m \times n\) \textit{exponent matrix} \(A=(\alpha_{i,j})\). Consider an auxiliary \textit{big vector} \(\beta=(\beta_1,\beta_2,\ldots,\beta_n)\) of positive numbers and let 
\begin{equation}\label{equation-exponent-vector-intro}
    A^\beta:=\left(\sum_{j\in[n]}\alpha_{i,j}\beta_j\right)_{i\in [m]}
\end{equation}
denote its \textit{evaluation vector}, which abusing notation we will also look at as a set. 
To prove elusiveness, for every low complexity map \(\Gamma\) we wish to demonstrate a polynomial \(h_\Gamma\) in the coordinate ring \(C[x_1,x_2,\ldots,x_m]\) of \(\C^m\) that vanishes at \(\Gamma(\C^s)\), but not on \(f(\C^n)\). To this end, we seek polynomials \(h_\Gamma\) of degree at most a carefully chosen parameter \(\ell\). 
First, we construct a set of distinct monomials \(M \subset \C[x_1,x_2,\ldots,x_m]\) indexed by elements in the sumset \(\ell^\le A^\beta\), where each monomial is of degree at most \(\ell\). Their \(\C\)-linear span \(\text{Span}(M)\) is \(|\ell^\le A^\beta|\) dimensional. For an arbitrary degree-\(r\) map \(\Gamma:\C^s\longrightarrow\C^m\), consider the \(\C\)-linear composition map \(\iota:\text{Span}(M)\longrightarrow\C[z_1,z_2,\ldots,z_s]\) taking \(h\longmapsto h\circ \Gamma\). By \(h\circ \Gamma\), we mean substituting for \(x_i\) in \(h\), the \(i\)-th coordinate polynomial of \(\Gamma\) (which is, of degree at most of \(r\) in the variables \(z_1,z_2,\ldots,z_s\)). The image of \(\iota\) is therefore contained in the subspace of polynomials in \(z_1,z_2,\ldots,z_s\) of degree at most \(\ell r\), an \(\binom{s+\ell r}{\ell r}\)-dimensional space. Therefore, if 
\[|\ell^\le A^\beta| > \binom{s+\ell r}{\ell r},\]
then \(\ker(\iota)\) is non trivial and contains a non zero polynomial \(h_\Gamma\) that vanishes at \(\Gamma(\C^s)\). 
We will then use Chebotarev's theorem to show that for \(\zeta\) a primitive \(p\)-th root of unity with \(p\) larger than the maximum in \(\ell^\le A^\beta\), at least one of the points \(\left\{f\left(\zeta^{b\beta_1},\zeta^{b\beta_2}, \ldots,\zeta^{b\beta_n}\right),b\in[p]\right\}\) is not a root of \(h_\Gamma\). Therefore, \(h_\Gamma\) does not vanish on \(f(\C^n)\), proving \(f\) is \((s,r)\)-elusive. 
In retrospect, we can view \(\left\{\left(\zeta^{b\beta_1},\zeta^{b\beta_2}, \ldots,\zeta^{b\beta_n}\right),b\in[p]\right\}\) as a hitting set, with each point coupled by the big vector \(\beta\). For every low complexity map \(\Gamma\), the evaluation of \(f\) at (at least) one of the points in the hitting set eludes \(\Gamma\). 
We warn that the elusive functions we construct can satisfy low degree polynomials in the coordinate ring (see remark \ref{remark-low-degree-annihilators}), but this is compatible with being elusive (for \(s\ll m-1\)).   
Using our proof technique for elusiveness, we can write down additive combinatorial problems corresponding to the elusive function construction problems.  
For instance, the following theorem is the analogue of \cite[Cor. 3.9]{raz}.
\begin{theorem}\label{theorem-reduction-intro}
        Let \(r,n,s\) be positive integers with \(1\le r \le n\) and \(m:=n\binom{n+r-1}{r}\), where \(s = n^{\omega(1)}\) is super polynomial in \(n\). If there exists an explicit  \((m,n,s/r,4^s)\)-expander \(A =(\alpha_{i,j})_{(i,j)\in[m]\times[n]}\)  with exponents bounded as \(\alpha_{(i,j)}=2^{n^{O(1)}}\), then VP\(\neq\)VNP.    
\end{theorem}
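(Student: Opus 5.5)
This follows the outline of Raz's reduction \cite[Cor. 3.9]{raz}, with his algebraic elusiveness proof replaced by the sumset criterion of \S\ref{subsection-intro-expansion-elusive}. I read an \((m,n,s/r,4^s)\)-expander as an \(m\times n\) exponent matrix \(A\) together with a big vector \(\beta\) such that the iterated sumset satisfies \(|\ell^\le A^\beta|>4^s\), with \(\ell=s/r\).

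\textbf{Step 1 (expansion gives elusiveness).} Let \(f:\C^n\to\C^m\) be the monomial map with exponent matrix \(A\). Take any degree-\(r\) map \(\Gamma:\C^s\to\C^m\). Consider the span of the monomials indexed by \(\ell^\le A^\beta\), each of degree at most \(\ell\). The composition \(h\mapsto h\circ\Gamma\) sends this span into polynomials in \(s\) variables of degree at most \(\ell r=s\), a space of dimension \(\binom{2s}{s}\le 4^s\). Expansion makes the source dimension larger, so \(\iota\) has a nonzero kernel element \(h_\Gamma\).

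\textbf{Step 2 (Chebotarev).} Choose a prime \(p>\max \ell^\le A^\beta\) and a primitive \(p\)-th root of unity \(\zeta\). Evaluating \(h_\Gamma\circ f\) at the points \((\zeta^{b\beta_j})_j\), \(b\in[p]\), gives
\[\sum_{c} \lambda_c\,\zeta^{bc},\]
summed over the distinct \(c\in\ell^\le A^\beta\). Distinctness of the \(c\) comes from the indexing of the monomials. If all these values vanished, the nonzero vector \((\lambda_c)\) would lie in the kernel of a square minor of the \(p\times p\) Fourier matrix. This contradicts Chebotarev's theorem, so \(f(\C^n)\not\subseteq\Gamma(\C^s)\). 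Hence \(f\) is \((s,r)\)-elusive, in the sense of Definition~\ref{definition-elusive}, and not merely elusive on an open set.

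\textbf{Step 3 (Raz's reduction).} I would invoke Raz's theorem that an explicit \((s,r)\)-elusive \(f:\C^n\to\C^m\) with \(m=n\binom{n+r-1}{r}\) yields a poly-definable polynomial with no circuit of size about \(s\) and degree \(r\). Raz builds this polynomial from the coordinates of \(f\) by a universal-circuit (homogenisation) argument: any small circuit for the target polynomial would place \(f(\C^n)\) inside the image of a degree-\(r\) map with about \(s\) parameters. With \(s=n^{\omega(1)}\) this is a superpolynomial lower bound. To conclude VP\(\neq\)VNP, the resulting polynomial must lie in VNP. Its coefficients are evaluations of \(f\), namely monomials \(\prod_j z_j^{\alpha_{i,j}}\) with \(\alpha_{i,j}=2^{n^{O(1)}}\). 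Such a monomial can be computed by repeated squaring with \(n^{O(1)}\) gates, so its coefficients are poly(n)-definable in Valiant's sense. Explicitness of \(A\) supplies uniformity.

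\textbf{Main obstacle.} The delicate point is reconciling exponential-degree coordinates with Raz's framework. His Cor.~3.9 assumes polynomial-degree \(f\), and I expect to need to redo his reduction in the following way: \(f\) is treated as a polynomial-size circuit computing the coordinate maps, and its exponentially large degree is allowed. This makes the output family lie in VNP rather than VP, as in the known corollary of Raz that explicit elusive maps of degree \(2^{n^{O(1)}}\) computed by small circuits suffice. I would also double-check the parameter bookkeeping that the degree bound \(\ell r=s\) matches the threshold \(4^s\).
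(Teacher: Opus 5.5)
Your Steps 1 and 2 are the elusiveness argument the paper intends; it only says ``virtually identical to Lemma~\ref{lemma-elusive}''. The count works even with the non-strict bound in the definition, since \(|\ell^\le A^\beta|\ge 4^s>\binom{2s}{s}\). The genuine gap is in Step 3, at the VNP-membership step. Your claim that \(\prod_j z_j^{\alpha_{i,j}}\) is poly\((n)\)-definable because repeated squaring computes it with \(n^{O(1)}\) gates is false. Valiant's p-definability, and hence VNP, requires polynomially bounded degree, and these coordinates have degree \(2^{n^{O(1)}}\). The polynomial Raz builds from the coordinates of \(f\) would then have exponential degree in the variables carrying \(f\). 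It is therefore not in VNP, and a superpolynomial lower bound for it says nothing about VP versus VNP. Your proposed remedy, redoing Raz's reduction while keeping exponential degree so that the output ``lies in VNP rather than VP'', cannot work. No exponential-degree family lies in VNP. What Raz actually provides for exponential-degree maps is a way to remove the high degree, not to keep it.

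The missing idea is the paper's ``one added detail'': multilinearisation (Raz's Prop.~1.2 and \S~1.5).
\begin{itemize}
\item Write \(\alpha_{i,j}=\sum_{k\in[L]}a_{i,j,k}2^{k-1}\) with \(L=n^{O(1)}\), introduce new variables \(y_{j,k}\), and set \(f'_i(y):=\prod_{j,k}y_{j,k}^{a_{i,j,k}}\).
\item Since \(f(z)=f'\bigl((z_j^{2^{k-1}})_{j,k}\bigr)\), we get \(f(\C^n)\subseteq f'(\C^{nL})\). So \(f'\) inherits \((s,r)\)-elusiveness from \(f\): any degree-\(r\) map containing the image of \(f'\) would also contain the image of \(f\).
\item The map \(f'\) is multilinear of degree at most \(nL=n^{O(1)}\) with zero/one coefficients.
\item Explicitness of \(A\) makes the predicate ``\(e\) is the bit string of row \(i\)'' polynomial-time decidable. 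By Valiant's criterion, this makes the coordinates of \(f'\) uniformly poly-definable.
\item The hypothesis \(s=n^{\omega(1)}\) remains superpolynomial in the new dimension \(nL\).
\end{itemize}
With this, Raz's Cor.~3.9 applies to \(f'\) as stated, and no reworking of his reduction is needed.
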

Here, an \(m\times n\) exponent matrix \(A\) is called an \((m,n,\ell,C)\)-expander if there is a big vector \(\beta\in \Z_{>0}^n\) such that \(|\ell^\le A^\beta|\ge C\) (see definitions \ref{definition-expanding-exponent} and \ref{definition-explicit} for details). 
Similarly, other elusive function construction problems in \cite{raz} can also be translated to combinatorial problems. 
We do not know how to construct such expanding exponential matrices with guarantees of superpolynomial bounds, but speculate briefly on the opportunities and difficulties. 
For simplicity, consider \(\ell A^\beta\) instead of \(\ell^\le A^\beta\) with \(\ell \approx s\). The \(\ell\)-fold expansion varies widely \(\ell|A^\beta|\le |\ell A^\beta|\le \binom{|A^\beta|+\ell}{\ell} \), from linear to exponential in \(\ell\). To land on the exponential side, it is necessary that \(A^\beta\) has numbers exponential in \(\ell\approx s\). It is sufficient if \(A^\beta\) is a pseudorandom set of such big numbers \cite{nat}. 
Entries of \(A\) can be exponential in \(n\), but not in \(s\). The only way to induce exponential in \(s\) numbers into \(A^\beta\) is through \(\beta_k\)s, which our framework allows for free without affecting any of the parameters. 
The constraint is that the number of elements in \(\beta\) is the dimension of the elusive function \(n\), a precious resource. The challenge is to design the matrix \(A\) such that for some \(\beta\), \(A^\beta\) looks pseudorandom with big numbers, without concentrating close to some \(n\) dimensional lattice. Such additive structure can prevent expansion, and in some sense is the only obstruction to expansion (see Frieman's theorem \cite{fri} or the Erd{\H{o}}s-Szemer{\'e}di \cite{es}/Erd{\H{o}}s \cite[Prob. 7]{erd} sum-product conjectures). 
We only need to show this for some \(\beta\), and it is sufficient to show for random \(\beta\).  
Therefore, we can phrase the exponent matrix  design as a pseudorandomness problem. The seed is the big vector \(\beta\) drawn uniformly from \([4^{O(s)}]\). The exponent matrix \(A\) is the linear pseudorandom generator transforming the length \(n\) big vector into the larger length \(m\) evaluation vector \(A^\beta\), whose iterated sumsets we wish expand. We do not know if such exponent matrices exist, and ask both the questions of their existence and explicit construction. 
There are \((m,n,\ell,C)\)-expanders for \(n\approx\log C\) with \(\alpha_{i,j} \in \{0,1\}\), since Cayley graphs of \(\Z/p\Z\) with \(\Omega(\log p)\) random generators are expanders \cite{hil} (think of \(p\) as the large prime in the proofs). 
We next describe explicit such expanders.  
\subsection{Superlinear bounds at sublogarithmic depth}
For parameter regimes tailored to proving superlinear lower bounds at small depth, we design new explicit exponent matrices that are \((n^2,n\log_2n,n,n^n)\)-expanders. One construction uses error correcting codes (from lossless unbalanced bipartite expanders) to build sumsets with the best possible expansion (see lemma \ref{lemma-elusive-code}). 
The other is ad-hoc, but ends up with smaller exponents (see lemma \ref{lemma-construction-main}).  
We start with a sparse \(n \times n \times n \times \log_2n\) format \(4\)-tensor. 
The tensor entries are zero/one, encoding \(n\)-ary and binary expansion of numbers designed to appear in the exponent vector \(A^\beta\) (as pictured in equation \ref{equation-design-vector}), with a fixed big vector \(\beta\) in mind. The \(4\)-tensor is then flattened to arrive at the exponent matrix. Proof of expansion follows from the uniqueness and expressiveness of binary/\(n\)-ary expansions, after substituting the big vector \(\beta\).    
We then prove in lemma \ref{lemma-elusive} that the function associated with the exponent matrix is elusive. The proof also serves as a blueprint to adapt the general strategy outlined in \S~\ref{subsection-intro-expansion-elusive} to other parameter ranges, such as in proving theorem \ref{theorem-reduction-intro} (at the end of \S~\ref{subsection-elusive}). The elusive function is then wrapped in the framework of \cite[\S~3,3 and \S~4.2]{raz} to prove our new lower bounds (see \S~\ref{subsection-lower-bounds}), which is superlinear at low depths, say \(d=o(\log n/\log\log n)\).  
\begin{theorem}\label{theorem-lower-bound-intro}
Let \(c,d>5\) be positive integers, set \(n:=2^c\).  
Every depth-\(\lfloor d/3\rfloor\) complex arithmetic circuit to compute the poly-definable polynomial  
\[ \sum_{u \in [n]} z_u \sum_{v\in [n]} y_v \prod_{(w,k)\in [n] \times [c+1]}x_{(w,k)}^{\gamma_{v,k}\delta_{(u,w)}}\]
(of degree \(c+3\) in \(n(c+3)\) variables) has size at least \((n^{1+1/d})/5,\) where \(\sum_{k=1}^{c+1}\gamma_{v,k}2^{k-1}=v\)
is the binary expansion of \(v\) and \(\delta_{(,)}\) is the Kronecker delta function.
\end{theorem}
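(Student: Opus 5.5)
The plan is to follow Raz's framework \cite[\S~3.3, \S~4.2]{raz}: a small shallow circuit for the displayed polynomial forces the associated monomial map to lie in the image of a low complexity map. We will contradict this by proving that map is elusive, combining sumset expansion with Chebotarev's theorem.

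\textbf{Identifying the monomial function.} Write \(P=\sum_{u}z_u\sum_v y_v\, m_{u,v}(x)\) with \(m_{u,v}=\prod_{k}x_{(u,k)}^{\gamma_{v,k}}\). The Kronecker delta collapses the product to \(w=u\), so \(m_{u,v}\) is a multilinear monomial in the \(c+1\) variables \(x_{(u,\cdot)}\) encoding the binary expansion of \(v\). The map \(f:\C^{n(c+1)}\to\C^{n^2}\), \(x\mapsto (m_{u,v}(x))_{u,v}\), is then the monomial map whose exponent matrix is the flattened \(n\times n\times n\times(c+1)\) tensor \(\gamma_{v,k}\delta_{(u,w)}\), in the spirit of lemma \ref{lemma-construction-main}.

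\textbf{Expansion.} Choose the big vector \(\beta_{(w,k)}=2^{k-1}N^{w}\) with \(N\) much larger than \(n\cdot 2^{c+1}\cdot\ell\). Then the evaluation vector is \(A^\beta=\{vN^u\}_{u,v}\). A sum of at most \(\ell=n\) such terms determines, by uniqueness of base-\(N\) expansion with no carries, the multiset of \(v\)'s summed in each block \(u\) only through their sum. Even so, choosing one \(v\in[n]\) per block \(u\) already gives \(n^n\) distinct sums in \(n\,A^\beta\). This is the \((n^2,n\log_2 n,n,n^n)\)-expander property.

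\textbf{Elusiveness.} Apply the argument of lemma \ref{lemma-elusive}. For any degree-\(r\) map \(\Gamma:\C^s\to\C^{n^2}\) with \(n^n>\binom{s+nr}{nr}\), a kernel element \(h_\Gamma\) of degree at most \(n\) vanishes on \(\Gamma(\C^s)\). Take a prime \(p\) exceeding \(\max \ell^\le A^\beta\) and \(\zeta\) a primitive \(p\)-th root of unity. The restriction \(b\mapsto h_\Gamma(f(\zeta^{b\beta}))\) equals \(\sum_{e}c_e\zeta^{be}\), where \(e\) ranges over the distinct sums in \(\ell^\le A^\beta\cup\{0\}\). If this vanished for all \(b\), Chebotarev's theorem, applied to the minor on the exponent set \(\{e\}\times\) a suitable set of \(b\)'s, would force every \(c_e=0\), a contradiction. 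So \(f\) is \((s,r)\)-elusive for \(s\approx n^{1+1/d}\)-ish parameters with \(r\) small.

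\textbf{Lower bound transfer.} We then invoke Raz's structural lemma. A depth-\(\lfloor d/3\rfloor\) circuit of size \(S\) computing a polynomial of this form yields, after homogenization and substituting \(z,y\), that \(f\) lies in the image of a degree-\(r\) map with \(s=O(S)\) and \(r=2^{O(d)}\) or similar. Elusiveness then forces \(S\ge n^{1+1/d}/5\).

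The main obstacle is this final parameter bookkeeping: checking that \(n^n>\binom{s+nr}{nr}\) holds with \(s\sim n^{1+1/d}\) and Raz's \(r\) depending on \(d\), using \(d>5\). That requires carefully matching the depth reduction \(\lfloor d/3\rfloor\) to the degree \(r\) in Raz's lemma.
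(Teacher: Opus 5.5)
Your identification of the monomial map, the expansion step and the elusiveness step are sound. The expansion is a harmless variant of lemma \ref{lemma-construction-main}. With \(\beta_{(w,k)}=2^{k-1}N^{w}\) and \(N>n\), the \(n^n\) sums \(\sum_u v_uN^u\), \(v_u\in[n]\), are carry-free and distinct. The paper instead takes \(N=n\), uses \(n\)-ary digits \(v\in[n-1]\), and adds \(n^n\) separately. In the elusiveness step, take the kernel inside the span of one monomial per distinct sum, as lemma \ref{lemma-elusive} does. Over all monomials of degree at most \(n\), your coefficients \(c_e\) could all cancel and Chebotarev would give nothing.

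The genuine gap is the transfer from circuits to low complexity maps. You leave it open and sketch it with the wrong degree and the wrong specialisation.

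\emph{The degree.} A degree \(r=2^{O(d)}\), indeed any \(r\ge 2d\), cannot work. The quantity \(\binom{s+nr}{nr}\) increases with \(r\), and for \(s=\lfloor n^{1+1/d}\rfloor\) one has \(\binom{s+2dn}{2dn}\ge (s/(2dn))^{2dn}\approx n^{2n}(2d)^{-2dn}\). Once \(n\) is large compared to \(d\), this exceeds even \(\binom{n^2+n}{n}\ge|n^{\le}A^\beta|\). So the degree must be essentially the depth \(d\) itself.

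\emph{The specialisation.} Nor can \(z\) and \(y\) be substituted away: that collapses \(f_A(x)\in\C^{n^2}\) to a single linear combination of its coordinates. Instead, \(z\) must be removed by differentiation, \(y\) kept symbolic, and \(x\) specialised.

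The paper does this in two steps.
\begin{enumerate}
\item Baur--Strassen \cite{bs}: a size-\(S\), depth-\(\lfloor d/3\rfloor\) circuit for \(\widetilde f=\sum_u z_u\widetilde f_u\) yields one of size \(5S\) and depth at most \(d\) computing every \(\partial\widetilde f/\partial z_u=\widetilde f_u=\sum_v y_vf_{(u,v)}(x)\). This is the source of both the \(1/5\) and the \(\lfloor d/3\rfloor\).
\item \cite[Prop.~3.11]{raz}: the \(\widetilde f_u\) are linear in \(y\), so fixing \(x\) and linearising \cite[Prop.~2.5]{raz} gives a linear circuit on the same graph computing \(y\mapsto f_A(x)\,y\). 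Its matrix entries are sums, over paths of length at most \(d\), of products of edge constants. Hence \(f_A(\C^{n(c+1)})\subseteq\Gamma(\C^{5S})\) for a degree-\(d\) map \(\Gamma\).
\end{enumerate}
Now \((\lfloor n^{1+1/d}\rfloor,d)\)-elusiveness forces \(5S>\lfloor n^{1+1/d}\rfloor\), hence \(S>n^{1+1/d}/5\). Its count \(\binom{s+nd}{nd}<(2es/(nd))^{nd}\le(2e/d)^{nd}n^n<n^n\) is exactly where \(d\ge 6>2e\) enters.

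Finally, you do not address the poly-definability asserted in the statement. The paper obtains it from the strong explicitness of \(A\) and the \(0/1\) multilinear monomials (end of \cite[\S~1.5]{raz}), together with \cite[Prop.~3.6]{raz}.
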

Without restrictions on depth, lower bounds for explicit polynomials of polynomial degree are rare, with the pioneering works of Strassen \cite{str} and Baur-Strassen \cite{bs} being notable exceptions. They construct explicit polynomials  of degree \(r\) in \(n\)-variables that need circuits of size at least \(\Omega(n\log r)\) \cite[Cor. 1,2]{bs}. Better bounds are known restricted to low depths. 
At constant depths,  Limaye, Srinivasan, and Tavenas \cite{bds} prove super polynomial lower bounds for explicit polynomials! See also \cite{bcs} for improvements. Their bounds are trivial beyond depths \(d=\omega(\log\log n)\) (a conservative estimate), hence we look at the depth range \(\omega(\log\log n)\) to \(o(\log n/\log\log n)\). Here, only superlinear bounds are known, with the best due to Shoup-Smolensky \cite{ss} and Raz \cite{raz}. We next compare our result with theirs, using the ratio \(\rho\) of the circuit size to the input size (that is, the number of variables) as a measure of superlinearity. 
Another parameter to consider is the degree \(r\) of the explicit polynomials. 
In Shoup-Smolensky, evaluation points are taken as part of the input, so that their bounds are for explicit polynomials. 
\renewcommand{\arraystretch}{1.4}
\begin{center}
\begin{tabularx}{0.99\textwidth} { 
  | >{\centering\arraybackslash}X 
  | >{\centering\arraybackslash}X 
  | >{\centering\arraybackslash}X 
  | >{\centering\arraybackslash}X
  |}
 \hline
  Depth \(d\), Variables \(n\) & Shoup-Smolensky \cite{ss} & Theorem \ref{theorem-lower-bound-intro} & Raz \cite[Cor. 4.6]{raz}\\ 
 \hline
  Superlinearity \(\rho\) & \(\Omega(dn^{1/d})\) & \(n^{1/d}/(5\log n+15)\) & \(n^{1/(2d)}/(25d+10)\)\\ 
\hline
 Degree \(r\) & \(\Theta(n)\) & \(\Theta(\log n)\) & \(\Theta(d)\) \\ 
\hline
\end{tabularx}
\end{center}
Shoup-Smolensky have the best superlinearity across the depth range, but the worst (highest) degree. Raz has the worst superlinearity but the best (lowest) degree, the latter being constant at constant depth! Our results interpolate between the two. At the shallow end \(d\approx \log\log n\), our superlinearity is close to Shoup-Smolensky, both quadratically better than Raz. Our degree is exponentially better than Shoup-Smolensky, but exponentially worser than Raz. A little off the deep end \(d \ll \log/\log\log n\), our degree is close to that of Raz (which is best), but our superlinearity remains quadratically better. Shoup-Smolensky have better superlinearity but their degree is exponentially worser than ours. We must mention that Raz's bounds are the only ones applicable to constant degree polynomials, and may have been optimized towards this important \cite[\S~1.3]{raz} goal. 
\subsection{Sumset expansion and symbolic rigid matrices}
Theorem \ref{theorem-reduction-intro} asks for \((m,n,s/r,4^s)\)-expanders with the number of iterations \(s/r\) being far fewer than the number of coordinates \(m\). In \S~\ref{section-rigidity}, via matrix rigidity like problems, we extend the range of interest for sumset expansion problems with lower bound implications, and ask for expanders where the number of iterations is far greater than the number of coordinates.   
Rigid matrices are those that are far (in Hamming distance) from low rank matrices. Valiant proposed the thier explicit construction as a way to prove superlinear bounds of the size of circuits computing the corresponding linear transformation (matrix-vector product) \cite{val}.  
Volk and Kumar proved that non-rigid \(n\times n\) matrices (parametrized by bounds on rank and Hamming distance) satisfy a polynomial of degree \(n^3\) in the coordinate ring of the matrix \cite[Thm. 1.1]{vk}. They also proved a more direct characterization : \(n \times n\) matrices whose linear transformation can be computed by a linear circuit of size at most \(s\ll n^2\) satisfy a polynomial in the matrix coordinates of degree at most \(n^3\) \cite[Thm. 1.3]{vk}. Parts of their proof are more reminiscent of Raz's techniques for bounds from elusive functions/high-rank tensors \cite{raz,raz-ten}, than that of Valiant. In particular, to prove theorem \cite[Thm. 1.3]{vk}, they show that there is a universal polynomial map \(U:\C^{2s}\longrightarrow\C^{n \times n}\) of degree a small polynomial in \(n\) whose image contains all \(n \times n\) matrices whose linear transformation can be computed by a linear circuit of size at most \(s \ll n^2\). Therefore, to construct matrices whose linear transformation needs circuits of size at least \(s\), it suffices to elude this one polynomial \(U\). Pausing to compare with elusive functions: there we had to elude every low dimensional low degree map, here we have to elude one fixed low dimensional but very high degree map, higher than the number of coordinates.  
We adapt our technique outlined in \S~\ref{subsection-intro-expansion-elusive} to elude \(U\) and construct symbolic matrices whose linear transformation needs circuits of size at least \(s\), given an explicit construction of expanding exponent matrices.   
We transform this symbolic linear transformation problem by rewiring the symbols as inputs to the computation, to arrive at explicit polynomials. Thereby, we prove that an explicit construction of exponent matrices that are expanders implies superlinear circuit size lower bound for certain explicit polynomials related to the linear transformation of the matrix, notably without any bound on the depth.
\begin{theorem}\label{theorem-rigidity-intro}
        Let \(n,n',s\) be positive integers with \(n \le n'<s \le n^2\). 
        If there exists an explicit \((n^2,n',n^3,n^{8s})\)-expander \(A =(\alpha_{(u,v),j})_{(u,v)\in[n]\times[n],j\in[n']}\)  with exponents bounded as \(\alpha_{(u,v),j}\in n^{O(1)}\), then the poly-definable polynomial 
        \[\sum_{u\in[n]} z_u \sum_{v\in[n]} y_v\prod_{j\in[n']}x_j^{\alpha_{(u,v),j}}\]
        needs arithmetic circuits over the complex numbers of size at least \(s/5\).
\end{theorem}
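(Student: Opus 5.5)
We argue by contradiction. Suppose the polynomial \(P(x,y,z)=\sum_{u} z_u\sum_v y_v\prod_j x_j^{\alpha_{(u,v),j}}\) has a circuit \(\Phi\) of size less than \(s/5\). The plan has three stages: turn \(\Phi\) into a small linear circuit for a matrix, invoke the universal map of Volk and Kumar, and then elude that map using expansion plus Chebotarev's theorem, as in \S~\ref{subsection-intro-expansion-elusive}.

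\textbf{Step 1 (circuit to linear circuit).} \(P\) is bilinear in \((z,y)\), and its coefficient matrix is the symbolic matrix \(M(x)=\big(\prod_j x_j^{\alpha_{(u,v),j}}\big)_{u,v}\). Fix any point \(\xi\in\C^{n'}\) and substitute \(x=\xi\); the result is a circuit of size less than \(s/5\) for the bilinear form \(z^{T}M(\xi)y\). By the Baur--Strassen derivative construction, the partial derivatives \(\partial/\partial z_u\), \(u\in[n]\), are then computable by a circuit of size at most \(5\) times as large. These derivatives are the entries of \(M(\xi)y\). Since this output is linear in \(y\), the standard homogenization argument (keep only the degree-one parts in \(y\)) gives a \emph{linear} circuit of size at most \(s\) for the map \(y\mapsto M(\xi)y\). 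The constant \(5\) matches the \(s/5\) in the statement. So \(M(\xi)\in U(\C^{2s})\) for every \(\xi\), where \(U\) is the universal map of \cite[Thm. 1.3]{vk}.

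\textbf{Step 2 (annihilator).} Volk and Kumar give a nonzero polynomial \(h\) on \(\C^{n\times n}\) of degree \(\ell\le n^3\) that vanishes on \(U(\C^{2s})\). If instead we build \(h\) ourselves, we use the dimension count of \S~\ref{subsection-intro-expansion-elusive}: take monomials in the matrix coordinates indexed by \(\ell^{\le}A^\beta\), and compare \(|\ell^{\le}A^\beta|\ge n^{8s}\) with the dimension of the space of polynomials in \(2s\) variables of degree \(\ell\deg U\). With \(\deg U\) polynomial in \(n\), that dimension is at most \(\binom{2s+\ell\deg U}{2s}\le n^{O(s)}\), and this is the reason the expansion target is \(n^{8s}\).

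\textbf{Step 3 (eluding \(U\)).} We show that \(h\) does not vanish on all of \(M(\C^{n'})\), which contradicts Step 1. Let \(\beta\) be the big vector witnessing expansion, and let \(p\) be a prime exceeding \(\max \ell^{\le}A^\beta\). Evaluate \(M\) at the hitting set \(\xi_b=(\zeta^{b\beta_j})_{j}\), \(b\in[p]\). Each monomial \(x^{\mathbf e}\) in the matrix coordinates becomes \(\zeta^{b\,e(\mathbf e)}\), where \(e(\mathbf e)\in\ell^{\le}A^\beta\cup\{0\}\). So \(h(M(\xi_b))=\sum_{t}c_t\zeta^{bt}\), where \(c_t\) collects the coefficients of \(h\) whose monomials map to the exponent \(t\). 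If this vanished for all \(b\in[p]\), Chebotarev's theorem (nonsingularity of the relevant minor of the Fourier matrix) would force every \(c_t=0\).

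The main obstacle is that distinct monomials of \(h\) can collapse to the same exponent \(t\), so \(c_t=0\) does not by itself give \(h=0\). I would avoid this as the paper's strategy suggests: restrict \(h\) to lie in the span of a set \(S\) of monomials chosen to have pairwise distinct exponents, one representative per element of \(\ell^{\le}A^\beta\). Then the collapse map is injective on \(S\), and \(|S|=|\ell^{\le}A^\beta|\ge n^{8s}\). Linear algebra then gives a nonzero \(h\in\mathrm{Span}(S)\) with \(h\circ U=0\), provided \(|S|\) exceeds the dimension of the degree-\(\ell\deg U\) polynomials in \(2s\) variables. The bookkeeping to check is that \(\ell=n^3\) together with the degree of the Volk--Kumar universal map gives \(\binom{2s+\ell\deg U}{2s}<n^{8s}\); this would pin down the exponent \(8\). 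Finally, explicitness (poly-definability of \(P\)) follows from the explicitness of \(A\) and the bound \(\alpha\in n^{O(1)}\).
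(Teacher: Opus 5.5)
Your outline is the paper's proof step for step. A circuit of size $<s/5$ becomes, via Baur--Strassen and Strassen's linearization, a linear circuit of size $\le s$ for $f_A(\xi)$. You specialize $x=\xi$ before differentiating and the paper does so after, which is immaterial. An annihilator $h$ of the Volk--Kumar map $U:\C^{2s}\to\C^{n\times n}$ is then taken in the span of one monomial per element of ${n^3}^{\le}A^\beta$, and Chebotarev at the points $(\zeta^{b\beta_j})_j$ shows $h\circ f_A\neq 0$.

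\textbf{The gap is the inequality you deferred as bookkeeping.} Writing the target dimension as ``$n^{O(s)}$'' hides the only thing that matters, the constant in the exponent. With $\deg U=s^4(s+1)$, as the paper quotes from Volk--Kumar, and $\ell=n^3$, the inequality is false. The composed polynomials have degree up to $D=n^3s^4(s+1)$, and since $s>n'\ge n\ge 2$,
\[\binom{2s+D}{2s}\ \ge\ \Big(\frac{D}{2s}\Big)^{2s}=\Big(\frac{n^3s^3(s+1)}{2}\Big)^{2s}\ >\ \Big(\frac{n^7}{2}\Big)^{2s}=n^{8s}\Big(\frac{n^6}{4}\Big)^{s}\ >\ n^{8s}.\]
So a span of dimension $n^{8s}$ need not meet $\ker(\iota_U)$, and no annihilator is produced. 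The paper's proof asserts the reverse inequality $n^{8s}>\binom{2s+s^4(s+1)n^3}{2s}$ at exactly this point, so following it does not help. As written, neither argument proves the statement with expansion target $n^{8s}$.

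\textbf{Two repairs are available.}

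\emph{Strengthen the hypothesis.} Since $s\le n^2$,
$\binom{2s+D}{2s}\le\big(e(1+n^3s^3(s+1)/2)\big)^{2s}\le(2e\,n^{11})^{2s}$.
Hence an $(n^2,n',n^3,n^{23s})$-expander suffices once $n\ge 30$.

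\emph{Abandon the single universal map.} This keeps $n^{8s}$, and in fact $n^{4s}$ suffices. Consider each of the finitely many shapes $G$ of a linear circuit with at most $s$ edges (underlying graph plus input and output labels). The map $U_G$ sending edge labels to the computed matrix has degree at most $s$ in at most $s$ variables, because each entry is a sum over input--output paths of products of edge labels.
\begin{itemize}
\item The relevant dimension is then at most $\binom{s+n^3s}{s}\le\big(e(1+n^3)\big)^s<n^{4s}$ for $n\ge 3$.
\item This yields a nonzero $h_G\in\mathrm{Span}(S)$ with $h_G\circ U_G=0$.
\item Your Step~3 shows each $h_G\circ f_A$ is a nonzero polynomial in $x$, so their finite product is nonzero.
\item Hence some $\xi\in\C^{n'}$ has $f_A(\xi)$ outside the image of every $U_G$. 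That is, $f_A(\xi)$ is not computable by any linear circuit of size at most $s$, contradicting your Step~1.
\end{itemize}
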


Theorem \ref{theorem-rigidity-intro} offers an avenue to add to the rarefied list \cite{str,bs} of superlinear bounds for explicit polynomials (without any restriction on depth) of polynomial degree, by constructing explicit \((n^2,o(s),n^3,n^{8s})\)-expanders, say for \(s=\lfloor n\log\log n\rfloor\).
\subsection{Semi-explicit high rank tensors}
Finding explicit tensors of high rank is a difficult problem (see remark \ref{remark-highrank-tensor-explicit}), with implications of  unconditional arithmetic circuit lower bounds. Strassen showed that the rank of a three dimensional tensor is a lower bound on the size of arithmetic circuits computing the corresponding trilinear form \cite{str-div}. For instance, superlinear rank bounds would imply superlinear bounds for general arithmetic circuits, even for constant degree polynomials! In higher dimensions, Raz \cite{raz-ten} showed that an explicit construction of \(d\)-dimensional \(n\times n \times \ldots \times n\) tensors of tensor rank \(n^{d(1-o(1))}\) in dimension \(d\approx \log n/\log \log n\) implies every arithmetic formula for the multilinear form of the tensor (and also for the permanent, by Valiant's completeness \cite{val-completeness}) is of superpolynomial size. An arithmetic formula is an arithmetic circuit whose underlying directed acyclic graph is a tree. By an argument similar to proof of theorem \ref{theorem-rigidity-intro}, we can construct high-rank symbolic tensors, eluding the universal map describing high border rank tensors  in \cite[Lem. 4.3]{vk}. The border rank, which is at most the tensor rank, is a robust version of tensor rank (see \S~\ref{subsection-tensor-rank}). 
This would give arithmetic formula lower bounds for computing the symbolic tensor over the field \(\C(z_1,z_2,\ldots,z_{n'})\), where \(n'\) is the dimension of the elusive map. If one targets a low enough dimension \(n'=n^{o(\log d)}\), one can rewire those \(n'\) inputs and write down a theorem similar to theorem \ref{theorem-rigidity-intro}, with implications for superpolynomial formula lower bounds. We refrain from writing out the details, but give a related construction (using sumset expansion with superlinear iterates) for semi-explicit high rank tensors. Semi-explicit means the coordinates have a short description (say in a number theoretic sense as an algebraic number), but not necessarily in a complexity theoretic sense to give poly\((n)\)-definable functions. 
Specializing symbolic high-rank tensors in one variable (that is, \(n'=1\)), we get semi-explicit high-rank tensors (with height one coordinates) over exponentially (in the number of tensor coordinates) smaller degree number fields than previously known (see remark \ref{remark-highrank-tensor}). 
\begin{theorem}\label{theorem-highrank-tensor-intro}
    Let \(n,d\) be positive integers. Let \(p\ge n^{2d(n^d+1)}\) be a prime and \(\zeta\in \C\) a primitive \(p\)-th root of unity. 
    Every \(d\)-dimensional \(n\times n \times \ldots \times n\) tensor whose coordinates (disregarding ordering) form the set \(\left\{\zeta^{n^{2di}}, i\in[n^d]\right\}\)  
     has border rank at least \(n^{d-1}/2d\), for large enough \(n^d\). 
\end{theorem}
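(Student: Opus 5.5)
We follow the template of \S~\ref{subsection-intro-expansion-elusive}, eluding one fixed universal map instead of all low complexity maps. Set \(N:=n^d\) and write \(e_i:=n^{2di}\) for \(i\in[N]\), so the tensor coordinates are \(\zeta^{e_1},\ldots,\zeta^{e_N}\). For border rank below \(R:=\lceil n^{d-1}/2d\rceil\), we use the universal parametrization \cite[Lem.~4.3]{vk}: there is a polynomial map \(U:\C^{s}\longrightarrow\C^{N}\) with \(s\le RdN\), or more conveniently \(s\le Rdn+1\) plus one degeneration parameter \(\epsilon\), of degree at most \(D=\mathrm{poly}(n,d,R)\), whose image closure contains every tensor of border rank at most \(R\). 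Since each coordinate of a rank-\(R\) tensor is a sum of \(R\) products of \(d\) entries, we may take the rank-\(R\) locus as the image of a degree-\(d\) map from \(\C^{Rdn}\). Border rank is then handled by Zariski closure: a polynomial vanishing on that image also vanishes on its closure. So it suffices to find, for the degree-\(d\) map \(\Gamma:\C^{Rdn}\to\C^N\), a nonzero polynomial \(h\) with \(h\circ\Gamma=0\) and \(h\) nonzero at our tensor.

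\textbf{Step 1 (kernel by counting).} Choose \(\ell\), probably \(\ell\approx N\) or a small multiple of it. Consider the span of monomials \(\prod_i x_i^{c_i}\) with \(\sum_i c_i\le \ell\). Each such monomial, evaluated at the point \((\zeta^{e_i})_i\), equals \(\zeta^{\sum_i c_ie_i}\). Because \(e_i=(n^{2d})^i\) and \(c_i\le\ell<n^{2d}\), the map \((c_i)\mapsto\sum_i c_ie_i\) is injective, by uniqueness of base-\(n^{2d}\) expansion. So the iterated sumset \(\ell^{\le}\{e_i\}\) has size \(\binom{N+\ell}{\ell}-1\), essentially maximal, and all of its elements are at most \(\ell n^{2dN}<p\) given \(p\ge n^{2d(N+1)}\). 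Composition with \(\Gamma\) lands in polynomials of degree at most \(\ell d\) in \(Rdn\) variables, a space of dimension \(\binom{Rdn+\ell d}{\ell d}\). We need
\[
\binom{N+\ell}{\ell}>\binom{Rdn+\ell d}{\ell d}.
\]
Take \(\ell=N\), so the left side is about \(4^N\). With \(Rdn\le N/2\), the right side is \(\binom{N/2+Nd}{N/2}\le (e(2d+1))^{N/2}\), which is too big. So \(\ell\) has to be chosen differently. Take \(\ell\) large, \(\ell\gg N\): then the left side is about \(\ell^N/N!\) and the right side is about \((\ell d)^{Rdn}/(Rdn)!\). With \(Rdn\le N/2\), the left side wins once \(\ell\) is a large polynomial in \(N\), say \(\ell\approx N^{c}\). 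That still gives \(\ell<n^{2d}\) when \(n^{2d}=N^2\) and \(\ell\) is around \(N^{2}\), with small constants to balance, and this is where the exponent \(2d\) comes from. This yields a nonzero \(h=\sum_{t}a_t x^{c(t)}\) in the kernel, whose exponent vectors \(c(t)\) correspond to distinct values \(\sigma_t=\sum_i c_i(t)e_i\in[0,p)\).

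\textbf{Step 2 (Chebotarev).} The tensor has coordinates \(\zeta^{e_i}\) in some order. By permuting the coordinates of \(\Gamma\) we may assume the order is the natural one, since the rank locus is not symmetric under arbitrary permutations but \(h\) is built for each fixed arrangement. Suppose \(h\) vanished at all Galois conjugates \((\zeta^{be_i})_i\), \(b\in[p-1]\), and also at \(b=0\), the all-ones point. That point is rank one, but the all-ones tensor has border rank \(1\le R\), and \(h\) vanishes on the closure containing it anyway. The vanishing would give \(\sum_t a_t\zeta^{b\sigma_t}=0\) for all \(b\). Choosing as many values of \(b\) as there are terms, the matrix \((\zeta^{b\sigma_t})\) is a square minor of the prime-order DFT matrix, which is nonsingular by Chebotarev's theorem, so every \(a_t=0\), a contradiction. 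Hence some conjugate \((\zeta^{be_i})\) is not a zero of \(h\). Now the Galois automorphism \(\zeta\mapsto\zeta^b\) fixes the rational-coefficient variety \(\overline{\Gamma(\C^{Rdn})}\), which is defined over \(\Q\). So the conjugate lies outside it if and only if \((\zeta^{e_i})\) itself does. Therefore the tensor has border rank greater than \(R\).

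\textbf{Main obstacle.} The real difficulty is the parameter balancing in Step 1. The polynomial degree \(\ell\) must satisfy \(\ell<n^{2d}\), so that base-\(n^{2d}\) digits give injectivity, and the sums must stay below \(p\). At the same time the count \(\binom{N+\ell}{N}\) must beat \(\binom{Rdn+\ell d}{Rdn}\) with \(Rdn\approx N/2\). I would try \(\ell=N^{3/2}\) or similar. The comparison then becomes \(N\log(\ell/N)\gtrsim (N/2)\log(\ell d\,e\cdot 2/N)\), which holds for large \(N\) once \(\ell\ge N^{1+\delta}\) and \(d=o(\ell/N)\). I would verify this estimate carefully, for large enough \(n^d\).
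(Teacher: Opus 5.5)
Your proposal is correct and is essentially the paper's proof: the same iterated sumset of $\{n^{2di}\}$ made injective by base-$n^{2d}$ digits, the same dimension count against the degree-$d$ rank-$R$ parametrization from $\C^{Rdn}$ (with Zariski closure for border rank), Chebotarev on the conjugate points $(\zeta^{b n^{2di}})_i$, and Galois descent back to $\zeta$. The variations are cosmetic. You use all monomials of degree $\le \ell$ with $\ell<n^{2d}$, where the paper uses a rational $h$ homogeneous of degree exactly $n^{2d}$. You descend via the $\Q$-definability of the secant cone rather than via the rationality of $h$. You also explicitly rule out $b=0$, which the paper glosses over.

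One small fix is needed. With $R=\lceil n^{d-1}/2d\rceil$ you only get $Rdn\le N/2+dn$, not $Rdn\le N/2$. For $d\ge 2$ and large $n^d$ the extra $dn$ is absorbed. At $d=1$, however, the count fails outright: $Rdn=N$, and the resulting claim ``border rank $>1$'' is false for a vector. So take $R=\lfloor n^{d-1}/2d\rfloor$ as the paper does.
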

For \(d=\Theta(\log n/\log\log n)\), since the border rank is at least \(n^{d-1}/2d = n^{d(1-o(1))}\), \cite[Cor. 6]{raz-ten} implies that there is no polynomial (in the \(nd\) inputs) sized arithmetic formula to compute the (not necessarily poly(\(nd\))-definable) multilinear form of a tensor in theorem \ref{theorem-highrank-tensor-intro}.  
\subsection{Towards succinct hitting set generators against algebraic natural proofs}
In \S~\ref{section-rigidity} and \S~\ref{section-tensors} (also \cite{vk}), polynomials vanishing at complexity classes (at small sized linear circuits and low border rank tensors respectively) were critical to the lower bounds. Grochow-Kumar-Saks-Saraf \cite{gkss} and Forbes-Shpilka-Volk \cite{fsv} present abstractions that study many lower bound methods in a unified manner \cite{gkss,fsv}, calling such polynomials (that annihilate algebraic complexity classes) algebraic natural proofs (\cite[Def. 1.1]{fsv}).
Finding a hard polynomial for a complexity class amounts to finding a non vanishing point of the annihilator.
Often, an annihilator is derived from the kernel of a linear map of polynomials, counting dimensions to certify a non-trivial kernel. Our methods should help in such scenarios, to construct (or prove) succinct hitting set generators.  
The thrust of the algebraic natural proofs formalism is to identify conditional obstructions to proving significant lower bounds by such methods \cite{gkss,fsv}, analogous to the Razborov-Rudich natural proofs barrier \cite{rr}. A missing ingredient in clarifying the obstruction is the knowledge of polynomials with small circuits that are pseudorandom (unlike the Boolean case, where small Boolean cryptographically pseudorandom circuits are known). To mitigate the issue, \cite{gkss,fsv} define pseudorandom polynomial families called succinct hitting sets, whose existence would imply an algebraic natural proof barrier. Forbes-Shpilka-Volk defined more demanding objects called as succinct hitting set generators \cite[Def. 1.4]{fsv} and constructed them in  special cases. 
These share some aspects with elusive functions, such as being of small dimension and yet eluding higher dimensional subvarieties. 
A curious question is if our methods apply to the construction of succinct hitting sets (and generators), thereby  demarcating obstructions to proving arithmetic circuit lower bounds.  
\section{Explicit elusive curves of exponential degree}\label{section-elusive-curves}
\begin{proof}[of theorem \ref{theorem-elusive-curve-intro}]
    Call the curve \(z\longmapsto (z,z^2,z^4,\ldots,z^{2^{m-1}})\) in question \(f:\C\rightarrow\C^m\). Set \(s:=\lfloor m^{9/10}\rfloor\) and \(\ell:=\lfloor s/2\rfloor\). Let \(\C[x_1,x_2,\ldots,x_m]\) denote the coordinate ring of \(\C^{m}\) and \(\mathcal{M}\) its \(\C\)-linear subspace generated by the basis  
    \[\left\{x_1^{a_1}x_2^{a_2}\ldots x_m^{a_m} \middle| \ a=(a_1,a_2,\ldots,a_m)\in \{0,1\}^m, d_H(a)= \ell\right\}\]
    of multilinear monomials of degree exactly \(\ell\), indexed by bit strings \(a\) of Hamming weight \(d_H(a)\) exactly \(\ell\). The strategy is to construct for every degree-\(2\) map \( \Gamma:\C^s \longrightarrow \C^{m}\), a polynomial \(h_\Gamma \in \mathcal{M}\) that vanishes on \(\Gamma(\C^s)\), but not on \(f(\C)\). To this end, consider an arbitrary such degree-\(2\) map 
    \(\Gamma:\C^s \longrightarrow \C^m\) taking 
    \[(z_1,z_2,\ldots,z_s)\longmapsto (g_i(z_1,z_2,\ldots,z_s))_{i\in[m]}.\]
Let \(\C[z_1,z_2,\ldots,z_s]_{\le 2\ell}\) denote polynomials in \(z_1,z_2,\ldots,z_s\) of degree at most \(2\ell\). 
The substitutions \(x_i\longmapsto g_i(z_1,z_2,\ldots,z_s)\) at the \(i\in[m]\) coordinates induce the \(\C\)-linear map 
\begin{align*}
   \iota_\Gamma: \mathcal{M} &\longrightarrow \C[z_1,z_2,\ldots,z_s]_{\le 2\ell} \\
   x_1^{a_1}x_2^{a_2}\ldots x_m^{a_m} &\longmapsto \prod_{i\in[m]}g_i(z_1,z_2,\ldots,z_s)^{a_i},
\end{align*}
which has a kernel \(\ker(\iota_\Gamma)\) of dimension at least one (for large enough \(m\)), since
\[\dim(\mathcal{M}) = \binom{m}{\ell} > \left(\frac{m}{\ell}\right)^\ell \approx 2^{(s\log_2m)/10}  \gg  4^s > \binom{s+2\ell}{2\ell}= \dim(\C[z_1,z_2,\ldots,z_s]_{\le 2\ell}). \]
Hence, there exists a non zero  
\[h_\Gamma(x_1,x_2,\ldots,x_m) = \sum_{\substack{a \in \{0,1\}^m
    \\  d_H(a) = \ell }} h_{a} x_1^{a_1}x_2^{a_2}\ldots x_m^{a_m}  \in ker(\iota_\Gamma).\]
By definition, \(ker(\iota_\Gamma)\) vanishes at \(\Gamma(\C^s)\), implying \(h_\Gamma(\Gamma(\C^s))=\{0\}\). 
Let \(p > 2^{m}\) be a prime, fix a primitive \(p\)-th root of unity \(\zeta \in \C\) and a subset \(B\subset \{0,1,\ldots,p-1\}\) of size \(|B|=\dim(\mathcal{M}) = \binom{m}{\ell}\).
We claim that there is a \(b \in B\) such that \(h_{\Gamma}(f(\zeta^b))\ne 0\). Assume otherwise, implying 
 \begin{equation}\label{equation-elusive-curve}
    0=h_\Gamma\left(f\left(\zeta^{b}\right)\right)   = \sum_{\substack{a \in \{0,1\}^m
    \\  d_H(a)= \ell }} h_{a} \zeta^{b(a_1+a_22+a_32^2+\ldots+a_m2^{m-1})},\ \forall b\in B.
 \end{equation}
 By the uniqueness of binary expansion, the exponents of \(\zeta^b\) in the above sum are all distinct, and the set of exponents 
 \[E:=\left\{a_1+a_22+a_32^2+\ldots+a_m2^{m-1} \middle| a \in \{0,1\}^m, d_H(a)=\ell \right\}\] has size \(\binom{m}{\ell}\). Further, the maximum element in \(E\) is at most \(p-1\). Therefore, we can relabel the summation in equation \ref{equation-elusive-curve} as
 \[\sum_{t\in E} \widehat{h}_t\zeta^{bt} = 0,\ \forall b \in B\]
 where at least one of the complex coefficients in \((\widehat{h}_t)_{t\in E}\) is non zero. Hence, the square matrix \(\left(\zeta^{bt}\right)_{b \in B,t\in E}\) has determinant 
\[\det\left(\left(\zeta^{bt}\right)_{b \in B,t\in E}\right)=0,\]
contradicting Chebotarev's theorem on roots of unity \cite{sl}. 
Therefore our assumption is wrong and there is indeed a \(b \in B\) such that \(h_{\Gamma}(f(\zeta^b))\ne 0\), proving \(f\) is \((s,2)\)-elusive. 
    \qed
\end{proof}
\begin{remark}\label{remark-low-degree-annihilators}
    The curve defined by \(z\xmapsto{f} (z,z^2,z^4,\ldots,z^{2^{m-1}})\) satisfies degree two equations in the coordinate ring. For instance, for every \(i\in[m-1]\), the polynomial \(x_i^2-x_{i+1}\) vanishes at \(f(\C)\). Further \(f(\C)\) is contained in the image of  
    \[(z_1,z_2,\ldots,z_{m-1}) \xmapsto{\Gamma} (z_1,z_2,\ldots,z_i,z_i^2,z_{i+1},\ldots,z_{m-1}),\]
    implying \(f\) is not \((m-1,2)\)-elusive. This is compatible with \(f\) being \((\lfloor m^{9/10}\rfloor,2)\)-elusive, since \(\Gamma(\C^{m-1})\) is a hypersurface (that is, \((m-1)\)-dimensional), and not \(\lfloor m^{9/10}\rfloor\)-dimensional. The complete intersection defined by the polynomials \(\{x_i^2-x_{i+1},i\in[m-\lfloor m^{9/10}\rfloor]\}\) is an \(\lfloor m^{9/10}\rfloor\)-dimensional variety containing \(f(\C)\), but theorem \ref{theorem-elusive-curve-intro} implies that it does not have a degree-\(2\) parametric equation. More generally, we warn that the \((s,r)\)-elusive functions that we construct may have low degree (even lower than \(r\)) annihilators in the coordinate ring, but this is not in contradiction with being \((s,r)\)-elusive (for small enough \(s\ll m-1\)).
\end{remark}
\begin{remark}
Explicitness of elusive functions \(f:\C^n\longrightarrow\C^m\) is defined later (see definitions \ref{definition-explicit-elusive} and \ref{definition-polydefinable}), but only for degree-\(n^{O(1)}\) maps. The exponential degree elusive curve in theorem \ref{theorem-elusive-curve-intro} does not fit into this. Using multilinearisation, the exponential degree can be reduced at the cost of increasing dimensions \cite[Prop. 1.2, \S~1.5]{raz}. The multilinearisation of the curve in theorem \ref{theorem-elusive-curve-intro} is only poly(\(m\))-definable, insufficient for lower bounds.  An ``explicit'' \((m^{9/10},2)\)-elusive curve of degree \(2^{m^{o(1)}}\) has poly(\(m^{o(1)}\))-definable  multilinearisation, sufficient to conclude VP\(\ne\)VNP. 
\end{remark}
\section{Elusive monomial functions for low depths}\label{section-elusive}
In this section, we construct elusive functions with the number of coordinates \(m=n^2\) being a perfect square. The dimension of the elusive functions will be close to \(n\), but not quite \(n\). We use \(n'\) to denote the dimension and construct elusive functions from \(\C^{n'}\) to \(\C^m\). 
\subsection{Sumset expansion implies elusiveness}\label{subsection-elusive}
\begin{definition}\label{definition-expanding-exponent}
    Let \(m,n',\ell, C\) be positive integers with \(n'< m\). Call a matrix \[A=\left(\alpha_{i,j}\right)_{(i,j) \in [m] \times [n']} \in \Z_{\ge 0}^{m \times n'}\]  
    an \textbf{\((m,n',\ell, C)\)-expander} if there exists a vector \(\beta=(\beta_j)_{j \in [n']} \in \Z_{>0}^{n'}\)  
    such that \[A^\beta:=\left(\sum_{j=1}^{n'}\alpha_{i,j}\beta_j\right)_{ i \in [m]} \in \Z_{\ge0}^{m} \]
    expands as \(|\ell^{\le}A^\beta|\ge C\). By abuse of notation, we consider the vector \(A^\beta\) as a set while taking sumsets or indexing monomials. 
    We will refer to \(A\) as the \textbf{exponent matrix}, \(\beta\) as the \textbf{big vector}, and \(A^\beta\) as the \textbf{evaluation vector} with respect to \(\beta\). 
\end{definition}

\begin{lemma}\label{lemma-elusive}
Let \(n,d\) be positive integers with \(d>5\), set \(m:=n^2\) and \(s:=\lfloor n^{1+\frac{1}{d}}\rfloor\). 
For every \((m,n',n,n^n)\)-expander \(A=(\alpha_{i,j})\), 
    \begin{align*}
        f_{A}:\C^{n'} &\longrightarrow \C^m\\
        (z_1,z_2,\ldots,z_{n'}) & \longmapsto \left(\prod_{j=1}^{n'}z_j^{\alpha_{i,j}}\right)_{1\le i \le m}
    \end{align*}
   is \((s,d)\)-elusive.  
\end{lemma}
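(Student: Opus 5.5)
We follow the blueprint of \S~\ref{subsection-intro-expansion-elusive}, mirroring the proof of theorem \ref{theorem-elusive-curve-intro}. Fix the big vector \(\beta\in\Z_{>0}^{n'}\) witnessing that \(A\) is an \((m,n',n,n^n)\)-expander, so \(|n^{\le}A^\beta|\ge n^n\), and set \(\ell:=n\). For each \(t\in n^{\le}A^\beta\) choose one representation \(t=\sum_{i\in[m]}a_i(A^\beta)_i\) with \(a\in\Z_{\ge0}^m\) and \(1\le \sum_i a_i\le \ell\), and let \(x^{a}\) be the corresponding monomial in \(\C[x_1,\ldots,x_m]\). Distinct \(t\) give distinct exponent vectors \(a\), hence distinct monomials. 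Let \(M\) be the set of these monomials and \(\mathcal{M}:=\mathrm{Span}(M)\), so \(\dim\mathcal{M}=|n^{\le}A^\beta|\ge n^n\). The key feature is that evaluating the monomial indexed by \(t\) at \(f_A(w^{\beta_1},\ldots,w^{\beta_{n'}})\) gives exactly \(w^{t}\), because \(\prod_j (w^{\beta_j})^{\alpha_{i,j}}=w^{(A^\beta)_i}\).

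Given an arbitrary degree-\(d\) map \(\Gamma:\C^s\to\C^m\), substitution gives a linear map \(\iota_\Gamma:\mathcal{M}\to\C[z_1,\ldots,z_s]_{\le \ell d}\), whose target has dimension \(\binom{s+nd}{nd}\). The main obstacle is the counting step. We bound \(\binom{s+nd}{nd}\le (e(s+nd)/(nd))^{nd}\). Since \(s\le n^{1+1/d}\), we have \(s/(nd)\le n^{1/d}/d\), so this is at most \((e(1+n^{1/d}/d))^{nd}\). Taking logarithms, we need \(nd\log(e+en^{1/d}/d)<n\log n\), that is, \(d\log(e(1+n^{1/d}/d))<\log n\). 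When \(n^{1/d}\ge d\), the left side is roughly \(d(1+\log 2)+\log n-d\log d\), which is below \(\log n\) once \(d\log d>d(1+\log 2)\); this holds for \(d>5\) (indeed for \(d\ge e^{1.7}\approx 5.5\)), and this is where the hypothesis \(d>5\) enters. When \(n^{1/d}<d\), we need \(n\) large relative to \(d\), but \(s\le n^{1+1/d}\) still gives \(s\) small. I expect a cleaner route through \(\binom{s+nd}{nd}\le 2^{s+nd}\)-type estimates is impossible, so the careful case analysis, possibly with slack from the \(\lfloor\cdot\rfloor\), is the delicate part to get right. We need a strict inequality \(\dim\mathcal{M}>\binom{s+nd}{nd}\), which then yields a nonzero \(h_\Gamma=\sum_{t}\widehat h_t x^{a(t)}\in\ker\iota_\Gamma\) vanishing on \(\Gamma(\C^s)\).

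Finally, take a prime \(p>\max n^{\le}A^\beta\), a primitive \(p\)-th root of unity \(\zeta\), and a set \(B\subset\{0,\ldots,p-1\}\) with \(|B|=|n^{\le}A^\beta|\). If \(h_\Gamma\) vanished at \(f_A(\zeta^{b\beta_1},\ldots,\zeta^{b\beta_{n'}})\) for all \(b\in B\), then \(\sum_{t\in n^{\le}A^\beta}\widehat h_t\zeta^{bt}=0\) for every \(b\in B\). This forces \(\det(\zeta^{bt})_{b\in B,\,t\in n^{\le}A^\beta}=0\), contradicting Chebotarev's theorem, since the exponents \(t\) are distinct residues modulo \(p\). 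Hence some point of \(f_A(\C^{n'})\) lies outside \(\Gamma(\C^s)\). As \(\Gamma\) was arbitrary, \(f_A\) is \((s,d)\)-elusive.
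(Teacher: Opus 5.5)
Your argument is the paper's proof step for step: the same monomial basis indexed by \(n^{\le}A^\beta\), the same kernel count against \(\binom{s+nd}{nd}\), and the same Chebotarev argument at the points \(f_A(\zeta^{b\beta_1},\ldots,\zeta^{b\beta_{n'}})\), \(b\in B\). Your case \(n^{1/d}\ge d\) is correct: there \(s\ge nd\), so \(\binom{s+nd}{nd}\le(2es/(nd))^{nd}\le(2e/d)^{nd}n^n<n^n\), since \(d\ge 6>2e\).

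The genuine gap is the case \(n^{1/d}<d\). Saying that \(s\) is ``still small'' is not an argument, and no argument can be supplied, because the lemma is false there. Take \(n=2\) and \(d=6\), so \(m=4\) and \(s=\lfloor 2^{7/6}\rfloor=2\). Let \(n'=1\), \(A=(1,2,3,4)^{T}\) and \(\beta=(1)\). Then \(2^{\le}A^\beta=\{1,\ldots,8\}\) has \(8\ge 4=n^n\) elements, so \(A\) is a \((4,1,2,4)\)-expander. Yet \(f_A(\C)=\Gamma(\C^2)\) for the degree-\(4\) map \(\Gamma(z_1,z_2)=(z_1,z_1^2,z_1^3,z_1^4)\), so \(f_A\) is not \((2,6)\)-elusive. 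Correspondingly \(\dim\mathcal{M}=8<91=\binom{14}{12}\), so the kernel argument has nothing to work with.

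The paper's own one-line count \(n^n>(2es/(nd))^{nd}>\binom{s+nd}{nd}\) has exactly the same hole. Its second inequality uses \((s+nd)/(nd)\le 2s/(nd)\), i.e.\ \(s\ge nd\), which is just your first case; in the example above \((2es/(nd))^{nd}=(e/3)^{12}<1\). So what is missing is a hypothesis tying \(n\) to \(d\), not a cleverer estimate.

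Assuming \(n\ge d^d\) forces \(s\ge nd\), and under that hypothesis your first case is a complete proof. This assumption holds in the regime \(d=o(\log n/\log\log n)\) where theorem~\ref{theorem-lower-bound-intro} is meaningful. You can weaken it: for \(s<nd\), the symmetric bound \(\binom{s+nd}{s}\le(2end/s)^s\le(2ed/n^{1/d})^{n^{1+1/d}}\) is below \(n^n\) whenever \(n^{1/d}>2e\). So \(n\ge 6^d\) also suffices, but some lower bound on \(n\) in terms of \(d\) is unavoidable.
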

\begin{proof} 
Since \(A\) is an \((m,n',n,n^n)\)-expander, there exists a big vector \(\beta=(\beta_j)_{j \in [n']} \in \Z_{>0}^{n'}\) such that \(|n^\le A^\beta|\ge n^n\).
For every \(t \in n^\le A^\beta\), pick a sequence \((i_1(t),i_2(t),\ldots,i_{n_t}(t)) \in [m]^{n_t}\) of indices with \(n_t\le n\) such that 
\[\left(\sum_{j\in [n']}\alpha_{i_1(t),j}\beta_j\right)+\left(\sum_{j\in [n']}\alpha_{i_2(t),j}\beta_j\right)+\ldots+\left(\sum_{j\in [n']}\alpha_{i_{n_t}(t),j}\beta_j\right)=t.\] 
Here, \(n_t\) is the number of summands in the chosen sum to hit the target \(t\), which can be chosen to be at most \(n\) since \(t \in n^\le A^\beta\). %
Let \(\C[x_1,x_2,\ldots,x_m]\) denote the coordinate ring of \(\C^{m}\). 
For distinct \(t,t^\prime \in n^\le A^\beta\), the monomials \(x_{i_1(t)}x_{i_2(t)}\ldots x_{i_{n_t}(t)} \neq x_{i_1(t^\prime)}x_{i_2(t^\prime)}\ldots x_{i_{n_{t'}}(t^\prime)}\). Therefore the subset of monomials 
\[M:=\left\{x_{i_1(t)}x_{i_2(t)}\ldots x_{i_{n_t}(t)},t\in n^\le A^\beta\right\} \subset \C[x_1,x_2,\ldots,x_m]\] 
is a basis for the \(|n^\le A^\beta|\) dimensional \(\C\)-linear subspace \(\mathcal{M}\) of the coordinate ring it spans. For every degree-\(d\) map \(\Gamma:\C^s \longrightarrow \C^m\), we intend to construct a polynomial \(h_\Gamma\in \mathcal{M}\) that vanishes on \(\Gamma(\C^s)\), but not on \(f(\C^{n'})\), thereby proving the lemma.  
To this end, consider an arbitrary such degree-\(d\) map \(\Gamma:\C^s \longrightarrow \C^{m}\) taking 
\[       (z_1,z_2,\ldots,z_s)\longmapsto \left(g_{i}(z_1,z_2,\ldots,z_s)\right)_{i\in [m]}.
 \]
Let \(\iota_\Gamma:\mathcal{M}\longrightarrow\C[z_1,z_2,\ldots,z_s]\) denote the \(\C\)-linear map 
induced by the substitution \(x_i \longmapsto g_i(z_1,z_2,\ldots,z_s)\) for each of the \(i\in [m]\) coordinates. That is, for \(t \in n^\le A^\beta\), the \(t\)-th monomial is mapped as
\[x_{i_1(t)}x_{i_2(t)}\ldots x_{i_{n_t}(t)}  \longmapsto \prod_{j\in[n_t]}\left(g_{i_j(t)}(z_1,z_2,\ldots,z_s)\right).\]
Since \(\Gamma\) is a degree-\(d\) map and \(\mathcal{M}\) consists of degree at most \(n\) polynomials, the image \(\iota_\Gamma(\mathcal{M})\) is contained in the \(\binom{s+nd}{nd}\)-dimensional subspace of degree at most \(nd\) polynomials in \(\C[z_1,z_2,\ldots,z_s]\). 
Since \(A\) is an \((m,n',n,n^n)\)-expander,  
\begin{align*}\label{equation-elusive-proof-dimension-count}
    \dim(\mathcal{M}) = |n^\le A^\beta| \ge n^n > \left(\frac{2es}{nd}\right)^{nd} > \binom{s+nd}{nd}, 
\end{align*}
implying the kernel \(\ker(\iota_\Gamma)\) is at least one dimensional, assuring there 
exists a non zero  
\[h_\Gamma(x_1,x_2,\ldots,x_m) = \sum_{t \in n^\le A^\beta} h_t x_{i_1(t)}x_{i_2(t)}\ldots x_{i_{n_t}(t)}  \in ker(\iota_\Gamma).\]
By construction, \(h(\Gamma(\C^s))=\{0\}\). It remains to prove that \(h_\Gamma\) does not vanish on \(f_A(\C^{n'})\).
Let \(p\) be a prime bigger than the maximum in \(n^\le A^\beta\), to ensure \(|\{t \bmod p|t\in n^\le A^\beta\}|= |n^\le A^\beta|\).  
Fix a primitive \(p\)-th root of unity \(\zeta \in \C\) and a subset \(B\subset \{0,1,\ldots,p-1\}\) of size \(|B|=|n^\le A^\beta|\). 
We claim there is a \(b\in B\) such that \(h_\Gamma\left(f_A\left(\zeta^{b\beta_1},\zeta^{b\beta_2},\ldots,\zeta^{b\beta_{n'}}\right)\right) \ne 0\). If the claim is false, 
 \begin{align*}
0=&h_\Gamma\left(f_A\left(\zeta^{b\beta_1},\zeta^{b\beta_2},\ldots,\zeta^{b\beta_{n'}}\right)\right)   =  \sum_{t \in n^\le A^\beta} h_t \zeta^{\sum_{j=1}^{n'}\alpha_{i_1(t),j}b\beta_j}\zeta^{\sum_{j=1}^{n'}\alpha_{i_2(t),j}b\beta_j} \ldots \zeta^{\sum_{j=1}^{n'}\alpha_{i_{n_t}(t),j}b\beta_j} \\
     = & \sum_{t \in n^\le A^\beta} h_t \zeta^{b\left[\left(\sum_{j=1}^{n'}\alpha_{i_1(t),j}\beta_j\right)+\left(\sum_{j=1}^{n'}\alpha_{i_2(t),j}\beta_j\right) +\ldots+\left(\sum_{j=1}^{n'}\alpha_{i_{n_t}(t),j}\beta_j\right)\right]} = \sum_{t \in n^\le A^\beta} h_t \zeta^{bt},\ \ \forall b\in B.
 \end{align*}
Since at least one coefficient in \(\{h_t,t\in n^\le A^\beta\}\) is non zero, the square matrix \(\left(\zeta_p^{bt}\right)_{b \in B,t\in n^\le A^\beta}\) has determinant 
\[\det\left(\left(\zeta_p^{bt}\right)_{b \in B,t\in n^\le A^\beta}\right)=0.\]
But this contradicts Chebotarev's theorem on roots of unity \cite{sl}. Hence, the claim  that there is a \(b\in B\) such that \(h_\Gamma\left(f_A\left(\zeta^{b\beta_1},\zeta^{b\beta_2},\ldots,\zeta^{b\beta_{n'}}\right)\right) \ne 0\) is true, implying \(f_A\) is \((s,d)\)-elusive. 
\qed
\end{proof}
\begin{remark}\label{remark-positive-characteristic}
With lemma \ref{lemma-elusive} as a representative example, we next discuss how to extend our results to fields of large positive characteristic. Let \(K\) be a field of characteristic \(q\). Consider the order \(p\) Fourier matrix \((\zeta^{ij})_{i,j\in\{0,1,\ldots,p\}}\) in the proof of lemma \ref{lemma-elusive}, but over \(K\). That is, \(p\) is a prime bigger than the maximum element in \(n^\le A^\beta\) and \(\zeta\) is a primitive \(p\)-th root of unity in the algebraic closure of \(K\). We appeal to Zhang's finite field analogue of Chebotarev's theorem on roots of unity \cite[Thm. A]{zha} to prove the theorem over \(K\), but Zhang's theorem is restrictive in two ways. First, it only works for prime \(p\) order Fourier matrices when \(q\) is a primitive root modulo \(p\).  
The second, more severe restriction is that the characteristic \(q\) needs to be exponentially big compared to the order \(p\) of the matrix. See \cite[Eqn. 3.1]{zha} for precise bounds, but in our contexts, \(q =\Omega(p^p)\) is necessary. 
Hence, we can prove analogues of our results in positive characteristic, but only with the characteristic \(q\) double exponential in the input size. Further, if we want to claim lower bounds for a particular (family of) \(q\),  we must assume (the quantitative)  Artin's primitive roots conjecture, to ensure the existence of primes \(p\) such that \(n^n \le p \le \log_p(q)\) and \(q\) is a primitive root modulo \(p\). The existence of such pairs of primes \(p,q\) is not known, even assuming the generalized Riemann hypothesis (see \cite{fp} for bounds that come close).
A curious question is if our methods can work over smaller characteristic, but it likely needs new ideas.   
\end{remark}
\begin{definition}\label{definition-polydefinable}
    An \(f\in\C[z_1,z_2,\ldots,z_n]\) is \textbf{poly\((n)\)-definable} if there exists a size \(n^{O(1)}\) circuit computing a polynomial \(\widehat{f}\in \C[x_1,x_2,\ldots,x_n,e_1,e_2,\ldots,e_a]\) (for some \(a=n^{O(1)}\)) such that 
     \[f(z_1,z_2,\ldots,z_n) = \sum_{(e_1,e_2,\ldots,e_a)\in \{0,1\}^a} \widehat{f}\left(z_1,z_2,\ldots,z_n,e_1,e_2,\ldots,e_a\right).\]  
  In contexts where the number of variables is clear, by poly-definable we mean poly\((n)\)-definable. 
\end{definition}
\begin{definition}\label{definition-explicit-elusive}
    Following \cite[Def. 1.3]{raz},  an \textbf{explicit} presentation of a function \(f:\C^n\longrightarrow\C^m\) mapping \((z_1,z_2,\ldots,z_n)\longmapsto (f_i(z_1,z_2,\ldots,f_n))_{i\in[m]}\) is a size \((n\log m)^{O(1)}\)  arithmetic circuit for a polynomial \(\widehat{f} \in \C[z_1,z_2,\ldots,z_n,e_1,e_2,\ldots,e_a,w_1,w_2,\ldots,w_b]\) such that 
    \[f_i(z_1,z_2,\ldots,z_n) = \sum_{(e_1,e_2,\ldots,e_a)\in \{0,1\}^a} \widehat{f}\left(z_1,z_2,\ldots,z_n,e_1,e_2,\ldots,e_a,i_1,i_2,\ldots,i_b\right),\ \ \forall i\in[m] ,\]  
    where \(a=n^{O(1)}\), \(b=\lceil\log_2m\rceil\) and \((i_1,i_2,\ldots,i_b)\) is the binary expansion of \(i\). 
    In particular, for an explicit presentation of \(f\), it is not sufficient for each coordinate \(f_i\) of \(f\) to be poly\((n\log m)\)-definable, there must be a uniform circuit whose specialisations define each of the coordinates, as above.  
\end{definition}
\begin{proof} [of theorem \ref{theorem-reduction-intro}]
    Following the outline  in \S~\ref{subsection-intro-expansion-elusive}, it is easy to show that the monomial map associated with \(A\) is \((s,r)\)-elusive. We skip the details as it is virtually identical to the proof of lemma \ref{lemma-elusive}, changing parameters mutatis mutandis. Explicitness is shown as in the proof of theorem \ref{theorem-lower-bound-intro}, with one added detail. Since the exponents are bounded \(\alpha_{(i,j)}=2^{n^{O(1)}}\) exponentially in \(n\), using a multilinearisation trick, one can lower the degree by introducing new variables resulting in a poly(\(\Theta(n)\))-definable map \cite[Prop. 1.2 and \S~1.5]{raz}. Therefore the theorem follows from \cite[Cor. 3.9]{raz}. \qed
\end{proof}
\subsection{Explicit polynomials needing superlinear size sublogarithmic depth circuits}\label{subsection-lower-bounds}
In this subsection, the number of coordinates \(m=n^2\) will be a perfect square. For ease of  describing the explicit polynomials, we will index the \(n^2\) coordinates as an \(n \times n\) matrix. 
\begin{definition}\label{definition-explicit}
    Let \(n',m\) be positive integers with \(n'< m\). Define an exponent matrix 
    \(A=\left(\alpha_{i,j}\right)_{i \in [m], j \in [n']}\in \Z_{\ge0}^{m \times n'}\) to be
    \begin{itemize}
        \item \textbf{explicit} if there is a deterministic polynomial time (in \(n'\log m\)) Turing machine that takes as inputs \(i\in [m], (\alpha'_{j})_{j\in[n']} \in \Z_{>0}^{n'}\) and outputs a bit that equals one if and only if \(\alpha'_{j} = \alpha_{i,j}, \forall j \in [n']\).
        \item \textbf{strongly explicit} if there is a deterministic polynomial time (in \(\log m\)) Turing machine that takes as inputs \(i\in [m],j\in[n'], \alpha'_{i,j}\in \Z_{>0}^{m\times n'}\) and outputs a bit that equals one if and only if \(\alpha'_{i,j} = \alpha_{i,j}\).
    \end{itemize}
\end{definition}
\begin{lemma}\label{lemma-construction-main}
Let \(c\) be  positive integer and set \(n:=2^c\). 
The exponent matrix 
    \[A:= \left(\gamma_{v,k}\delta_{(u,w)}\right)_{(u,v) \in [n]\times [n],(w,k) \in [n]\times [c+1]}\]
 is a strongly explicit \((n^2,n(c+1),n,n^n)\)-expander, where \(\sum_{k=1}^{c+1}\gamma_{v,k}2^{k-1}=v\)
is the binary expansion of \(v\) and \(\delta_{(,)}\) is the Kronecker delta function. 
\end{lemma}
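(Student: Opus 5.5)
To prove that \(A\) is an \((n^2,n(c+1),n,n^n)\)-expander, I would exhibit an explicit big vector \(\beta\) that makes \(A^\beta\) a set of ``digits placed in separate positions'' of a large base. Then \(n\)-fold sums can be decoded uniquely.

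\textbf{Choice of \(\beta\) and the evaluation vector.} Fix the base \(N:=n+1\), which exceeds every \(v\in[n]\). For \((w,k)\in[n]\times[c+1]\), set \(\beta_{(w,k)}:=2^{k-1}N^{w-1}\). The Kronecker delta kills every term with \(w\neq u\), so for each coordinate \((u,v)\) we get
\[A^\beta_{(u,v)}=\sum_{(w,k)}\gamma_{v,k}\delta_{(u,w)}2^{k-1}N^{w-1}=N^{u-1}\sum_{k=1}^{c+1}\gamma_{v,k}2^{k-1}=v\,N^{u-1}.\]
The binary expansion of \(v\in[n]\) has at most \(c+1\) bits because \(n=2^c\), so this uses exactly the \(c+1\) columns available. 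Thus the evaluation vector consists of the numbers \(vN^{u-1}\) for \((u,v)\in[n]\times[n]\).

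\textbf{Counting the sumset.} For each tuple \((v_1,\ldots,v_n)\in[n]^n\), take one summand from each ``position'' \(u\). This gives \(\sum_{u\in[n]}v_uN^{u-1}\in nA^\beta\subseteq n^\le A^\beta\). Since \(1\le v_u\le n<N\), these are base-\(N\) expansions with digits \(v_u\). By uniqueness of base-\(N\) expansion, distinct tuples give distinct integers. Hence \(|n^\le A^\beta|\ge n^n\).

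\textbf{Remaining conditions.} All entries \(\gamma_{v,k}\delta_{(u,w)}\) lie in \(\{0,1\}\subset\Z_{\ge0}\). The big vector \(\beta\) is positive. The requirement \(n'=n(c+1)<n^2=m\) amounts to \(c+1<2^c\), which holds for \(c\ge2\); the case \(c=1\) would have to be excluded or handled separately.

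\textbf{Strong explicitness.} Given \(i=(u,v)\), \(j=(w,k)\) and a candidate value, a Turing machine computes bit \(k\) of \(v\) and tests whether \(u=w\). This yields \(\alpha_{i,j}\), which it compares with the candidate. Everything runs in time polynomial in \(\log m=2c\).

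I expect no real obstacle here. The only point needing care is the choice of base \(N>n\), so that digit sums do not carry and distinctness of the \(n^n\) sums holds.
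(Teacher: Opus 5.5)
Your proof is correct and is essentially the paper's argument. The paper takes the same big vector but in base \(n\) rather than \(n+1\), getting \(A^\beta=(vn^{u-1})_{(u,v)}\). It then counts by expressiveness: every integer in \([n^n-1]\) is a sum of at most \(n\) entries with digits in \([n-1]\), and \(n^n\in A^\beta\) supplies the last element. You instead count by injectivity of exact \(n\)-fold sums via unique base-\(n+1\) expansion. Your observation that \(n'<m\) forces \(c\ge 2\) is a valid edge case that the paper's statement overlooks; it does no harm, since the lemma is only applied later with \(c>5\).
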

\begin{proof}
Consider the big vector \[\beta= (\beta_{w,k})_{(w,k) \in [n]\times [c+1]}:= (2^{k-1}n^{w-1})_{(w,k) \in [n]\times [c+1]}.\] 
The corresponding evaluation vector (indexed as a matrix) is \[A^{\beta} = \left(vn^{u-1}\right)_{(u,v) \in [n]\times [n]},\] 
since for every \((u,v)\in [n]\times[n]\), 
\begin{equation*}\label{equation-design-exponents}
    \sum_{(w,k)\in [n]\times [c+1]} \gamma_{v,k}\delta_{(u,w)}\beta_{w,k}= \sum_{k \in [c+1]}\gamma_{v,k}\beta_{u,k} =\sum_{k \in [c+1]}\gamma_{v,k}2^{k-1}n^{u-1}=vn^{u-1}.
\end{equation*}
It is illustrative to write the evaluation vector down as the matrix:
\begin{equation}\label{equation-design-vector}
   A^\beta = 
\begin{pmatrix}
1 & 2 & \ldots & n\\
n & 2n & \ldots & n^2\\
\vdots & \vdots & \ddots & \vdots\\
n^{n-1} & 2n^{n-1} & \ldots & n^{n}\\
\end{pmatrix}. 
\end{equation}
Every number of the form \(v n^{u-1}, (v,u) \in [n-1]\times [n]\) is an entry in \(A^\beta\). Hence, by \(n\)-ary expansion, every number in \([n^n-1]\) can be expressed as a sum of at most \(n\) entries from \(A^\beta\). Hence, \(|n^\le A^\beta|\ge n^n-1\). Also, \(n^n \in A^\beta\) implies \(n^n \in n^\le A^\beta\).  
Hence \(|n^\le A^\beta|\ge n^n\) and \(A\) is an \((n^2,n(c+1),n,n^n)\)-expander. Strong explicitness of \(A\) is clear, since given \((u,v),(w,k)\), to decide the predicate \(\gamma_{v,k}\delta_{(u,w)}\) only involves the binary expansion of \(v\le n\) and a couple of comparisons.
\qed
\end{proof}

\begin{proof} (of theorem \ref{theorem-lower-bound-intro})
Fix a \((c,n)\) as in the statement and consider the exponent matrix \(A\) from lemma \ref{lemma-construction-main}. For every \((w,k)\in[n]\times [c+1]\), introduce a variable \(x_{w,k}\). For every \((u,v) \in [n] \times [n]\), define the monomial 
\[f_{(u,v)}:= \prod_{(w,k)\in [n] \times [c+1]}x_{(w,k)}^{\alpha_{(u,v),(w,k)}} = \prod_{(w,k)\in [n] \times [c+1]}x_{(w,k)}^{\gamma_{v,k}\delta_{(u,w)}} \]
corresponding to the \((u,v)\)-th row of \(A\). Each \(f_{(u,v)}\) is multilinear, since the \(\gamma_{v,k}\delta_{(u,w)}\) are zero or one. Further, the degree of \(f_{(u,v)}\) is at most \(c+1\), despite appearing to be a product of \(n(c+1)\) terms, due to the sparsity induced by \(\delta_{(u,w)}\). 
By lemma \ref{lemma-construction-main}, \(A\) is an \((n^2,n(c+1),n,n^n)\)-expander, which is sufficient expansion for lemma \ref{lemma-elusive} to conclude that 
\begin{align*}
    f_A:\C^{n(c+1)} &\longrightarrow \C^{n\times n} \\
(x_{w,k})_{(w,k)\in [n]\times [c+1]} &\longmapsto \left(f_{(u,v)}\right)_{(u,v) \in [n] \times [n]}
\end{align*}
is \((\lfloor n^{1+1/d}\rfloor,d)\)-elusive. 
We now appeal to Raz's framework to convert the elusive function \(f_A\) to an explicit polynomial with circuit lower bound, as is done in \cite[\S~4.2]{raz}. First, we use \cite[Prop. 3.11]{raz} to package the elusive function into a set of explicit polynomials as follows. Introduce a new set of variables \(y_1,y_2,\ldots,y_n\) and in accordance with \cite[\S~3.3]{raz} define 
\[\widetilde{f}_{u}:= \sum_{v\in [n]} y_vf_{(u,v)}, \ \forall u \in [n].\]
By \cite[Prop. 3.11]{raz}, \(f_A\) being \((\lfloor n^{1+1/d}\rfloor,d)\)-elusive implies that every depth \(d\) arithmetic circuit over \(\C\) to compute the \(n\) polynomials \((\widetilde{f}_u)_{u\in[n]}\) has size at least \(n^{1+1/d}\). Second, to arrive at a single polynomial to which the circuit lower bound applies, we look to the Baur-Strassen theorem \cite{bs}(see also \cite{mor}), which states that if a polynomial can be computed by a size \(s'\) and depth \(d'\) arithmetic circuit, then there is an arithmetic circuit of size \(5d'\) and depth \(3d'\) that computes all its partial derivatives. Introduce \(n\) new variables \(z_1,z_2,\ldots,z_n\) and set 
\[  \widetilde{f}:= \sum_{u \in [n]} z_u \widetilde{f}_{u}\Rightarrow \frac{\partial\widetilde{f}}{\partial z_u} = \widetilde{f}_u ,\ \forall u \in [n].\]
Therefore, every depth \(\lfloor d/3\rfloor\) arithmetic circuit over the complex numbers that computes \(\widetilde{f}\) has  size at least \((n^{1+1/d})/5\). 
Expanding the definition, \(\widetilde{f}\) is indeed the polynomial 
\begin{equation*}
    \widetilde{f} = \sum_{u \in [n]} z_u \widetilde{f}_{u} = \sum_{u \in [n]} z_u \sum_{v\in [n]} y_v \prod_{(w,k)\in [n] \times [c+1]}x_{(w,k)}^{\gamma_{v,k}\delta_{(u,w)}}
\end{equation*}
in the theorem statement. All that remains is to prove that \(\widetilde{f}\) is poly-definable. By lemma \ref{lemma-construction-main}, \(A\) is explicit (in fact, strongly) in the sense of definition \ref{definition-explicit}. Since \(f_A\) is multilinear with only zero/one coefficients, \(A\) being explicit implies that \(f_A\) is poly-definable (see end of \cite[Sec. 1.5]{raz}). Since  \(f_A\) is poly-definable, so are \((\widetilde{f}_u)_u\), and so is \(\widetilde{f}\) (by \cite[Prop. 3.6]{raz}). 
\qed
\end{proof}

\subsection{Elusive functions from error correcting codes}\label{subsection-codes}
\begin{definition}\label{definition-codes}
    Let \(\F_2\) be the field with two elements. A (binary linear) code of length \(m\) is an \(\F_2\)-linear subspace of \(\F_2^m\) and its elements are called codewords. The minimum distance of a code \( C\subseteq \F_2^m\) (of non zero dimension) is \(\min_{ x\in C, x\ne 0^m}d_H(x),\) where we recall \(d_H\) is the Hamming weight. A flat matrix \(H=(h_{j,i}) \in \F_2^{n'\times m}\)  defines a code \(C_H:=\{x\in\F_2^m|Hx=0\}\). For a code \(C\), every matrix \(H\) such that \(C=C_H\) is called a parity check matrix of \(C\). 
    A (sparse) vector in \(x \in \F_2^m\) is called an error vector and \(Hx\in\F_2^{n'}\) is its syndrome.
\end{definition}
\begin{lemma}\label{lemma-elusive-code}
Let \(n',\ell,m\) be positive integers with \(n'<m\).  Let \((\alpha_{i,j})_{i\in[m],j\in[n']} \in \{0,1\}^{m\times n'}\) be an exponent matrix and \((\hat\alpha_{i,j})_{i,j} \in \F_2^{m\times n'}\) its reduction modulo two. 
    If the code
       \[\left\{\hat\gamma_{i} \in \F_2^{m} \middle| \sum_{i\in [m]} \hat\alpha_{i,j}\hat\gamma_{i} = 0,\ \forall j \in [n'] \right\}\]
     (with the transpose of \((\hat\alpha_{i,j})_{i,j}\) as a parity check matrix) has minimum distance at least \(2\ell+1\), then \((\alpha_{i,j})_{i,j}\) is an \((m,n',\ell,\binom{m+\ell}{\ell})\)-expander.
\end{lemma}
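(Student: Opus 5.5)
The plan is to take a big vector that turns integer addition into coordinatewise vector addition, and then use the code's minimum distance, through a $2$-adic descent, to show that distinct multisets of rows have distinct sums.

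Write $\alpha_i:=(\alpha_{i,1},\ldots,\alpha_{i,n'})\in\{0,1\}^{n'}$ for the $i$-th row of the exponent matrix. A sum of at most $\ell$ entries of $A^\beta$ is indexed by a multiset of rows, that is, by a vector $c\in\Z_{\ge0}^m$ with $1\le \sum_i c_i\le \ell$. It evaluates to $\sum_{j}\big(\sum_i c_i\alpha_{i,j}\big)\beta_j$.

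\textbf{Choice of big vector.} Take $\beta_j:=(\ell+1)^{j-1}$. Each coordinate $\sum_i c_i\alpha_{i,j}$ of the integer vector $\sum_i c_i\alpha_i$ lies in $\{0,1,\ldots,\ell\}$, because the $\alpha_{i,j}$ are zero/one and $\sum_i c_i\le\ell$. Uniqueness of base-$(\ell+1)$ expansion then shows that the map $c\mapsto\sum_j(\sum_i c_i\alpha_{i,j})\beta_j$ is injective exactly when $c\mapsto \sum_i c_i\alpha_i\in\Z^{n'}$ is injective on such $c$. So it suffices to prove that injectivity.

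\textbf{Descent argument (the main step).} Suppose $c\ne c'$ both have total weight at most $\ell$ and $\sum_i c_i\alpha_i=\sum_i c'_i\alpha_i$. Set $e:=c-c'\in\Z^m$, so $e\ne0$, $\sum_i e_i\alpha_i=0$ in $\Z^{n'}$, and $\|e\|_1\le 2\ell$.
\begin{itemize}
\item Reducing modulo two, the indicator vector $\hat\gamma$ of the coordinates where $e_i$ is odd satisfies $\sum_i\hat\alpha_{i,j}\hat\gamma_i=0$ for all $j$. So $\hat\gamma$ is a codeword.
\item Its weight is at most $\|e\|_1\le2\ell$, which is below the minimum distance $2\ell+1$. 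Hence $\hat\gamma=0$ and every $e_i$ is even.
\item Then $e/2$ is a nonzero integer relation $\sum_i (e_i/2)\alpha_i=0$ with strictly smaller $\ell_1$ norm, still at most $2\ell$. The same argument applies to it.
\end{itemize}
Iterating gives an infinite strictly decreasing sequence of positive integers $\|e\|_1>\|e/2\|_1>\cdots$, which is impossible. So distinct multisets give distinct sums.

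I expect this to be the only real obstacle. Linear algebra over $\F_2$ controls only parities, while the sumset allows repeated summands with integer multiplicities; the descent is what bridges the two.

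\textbf{Counting.} By injectivity, $|\ell^\le A^\beta|$ equals the number of nonempty multisets of size at most $\ell$ from $[m]$, which is $\binom{m+\ell}{\ell}-1$. To reach the stated bound one must also account for the empty multiset, and this is where the argument falls one short. The minimum distance hypothesis (at least $3$) forces every row $\alpha_i$ to be nonzero, since a zero row would give a weight-one codeword, so $0\notin A^\beta$ and the value $0$ is not available to recover the missing element. I would therefore either adopt the convention that the empty sum $0$ is counted, or state the bound as $\binom{m+\ell}{\ell}-1$ and check that the applications, which use only exponential growth, are unaffected. I would flag this off-by-one rather than hide it.
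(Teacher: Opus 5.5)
Your proposal is correct, and it is more careful than the paper's proof, whose counting step is wrong. The two arguments share the same skeleton. The paper also takes $\beta_j=(\ell+1)^{j-1}$, uses uniqueness of $(\ell+1)$-ary expansion to get $\sum_i\nu_i\alpha_{i,j}=\sum_i\mu_i\alpha_{i,j}$ for all $j$, reduces modulo two, and kills the difference using the minimum distance.

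\textbf{The difference is the domain.} The paper only considers $0/1$ vectors $\nu$ of Hamming weight at most $\ell$, i.e.\ subsets of rows rather than multisets. For subsets a single reduction mod two suffices, because two $0/1$ vectors with the same reduction are equal, so no descent is needed. The paper then asserts that there are $\binom{m+\ell}{\ell}$ such vectors. In fact there are $\sum_{k=0}^{\ell}\binom{m}{k}$, which is strictly smaller once $\ell\ge 2$. The paper also includes $\nu=0$, although $\pi(0)=0\notin\ell^\le A^\beta$, since no row is zero. As written, the paper's argument therefore only gives $|\ell^\le A^\beta|\ge\sum_{k=1}^{\ell}\binom{m}{k}$.

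\textbf{Why your descent is the needed step.} A bound of the form $\binom{m+\ell}{\ell}$ requires counting multisets. Injectivity on multisets must then handle even multiplicities such as $c=2e_1$, which are invisible mod two. Your $2$-adic descent does exactly this. An equivalent compact form: minimum distance at least $2\ell+1$ makes any $2\ell$ rows independent over $\F_2$. Hence they are independent over $\Q$, since a rational relation can be scaled to a primitive integer relation, which stays nontrivial mod two. Finally, $c-c'$ is supported on at most $2\ell$ rows.

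\textbf{Your off-by-one flag is right, and stronger than you state.} Every element of $\ell^\le A^\beta$ is the sum of some nonempty multiset of at most $\ell$ entries of $A^\beta$. So $|\ell^\le A^\beta|\le\binom{m+\ell}{\ell}-1$ for every exponent matrix and every big vector. The stated conclusion is thus never attainable, whatever the code. Your bound $\binom{m+\ell}{\ell}-1$ is exactly the optimal corrected statement, and your argument achieves it with equality. The use in remark \ref{remark-codes} only needs exponential size, so stating the lemma with $\binom{m+\ell}{\ell}-1$ costs nothing downstream.
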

\begin{proof}
Set \(\beta_j:=(\ell+1)^{j-1}\) for \(j\in[n']\), so that the big vector \(\beta=(\beta_j)_{j \in [n]}\) is the basis of \((\ell+1)\)-ary expansion. Let 
    \(E:=\left\{\nu=(\nu_{i})_{i\in [m]}\in\{0,1\}^{m}\mid d_H(\nu)\le \ell\right\}\) and consider the map \(\pi:E \longrightarrow \Z\) taking 
    \[\nu \longmapsto \sum_{i\in[m]}\sum_{j\in[n']}\nu_{i}\alpha_{i,j}\beta_j.\]
    We claim for distinct \(\nu,\mu\in E\) that \(\pi(\nu)\neq\pi(\mu)\). To this end, consider \(\nu,\mu\in E\) such that 
    \begin{equation*}
\sum_{i\in[m]}\sum_{j\in[n']}\nu_{i}\alpha_{i,j}\beta_j=\sum_{i\in[m]}\sum_{j\in[n']}\mu_{i}\alpha_{i,j}\beta_j.
    \end{equation*}
    Since \(\sum_{i\in[m]}\nu_{i}\alpha_{i,j}\le \ell\) and \(\sum_{i\in[m]}\mu_{i}\alpha_{i,j}\le \ell\),
    by the uniqueness of \((\ell+1)\)-ary expansion, 
    \begin{equation}\label{equation-syndrome-integers}
        \sum_{i\in[m]}\nu_{i}\alpha_{i,j}=\sum_{i\in[m]}\mu_{i}\alpha_{i,j},\ \ \forall j\in[n'].
    \end{equation}
    Denoting \(\hat\nu_{i},\hat\mu_{i}\in \F_2\) to be the respective reductions of \(\nu_{i},\mu_{i}\) modulo \(2\), equation \ref{equation-syndrome-integers} implies that the reduced vectors \((\hat\nu_{i})_{i\in[m]}\) and \((\hat\mu_{i})_{i\in[m]}\) satisfy
        \begin{equation}\label{equation-syndrome}        \sum_{i\in[m]}\hat\nu_{i}\hat\alpha_{i,j}=\sum_{i\in[m]}\hat\mu_{i}\hat\alpha_{i,j},\ \ \forall j\in[n'].
    \end{equation}
    In coding terms, this is stating that the error vectors \((\hat\nu_{i})_{i\in[m]}\) and \((\hat\mu_{i})_{i\in[m]}\) have the same syndrome. Therefore, their difference 
    \((\hat\nu_{i})_{i\in[m]} - (\hat\mu_{i})_{i\in[m]}\) 
    is a codeword. Since, each of the error vectors is of weight at most \(\ell\) and the minimum weight of the code is at least \(2\ell+1\), the difference of the error vectors is the all zero codeword. Hence, 
    \( (\hat\nu_{i})_{i\in[m]} = (\hat\mu_{i})_{i\in[m]}\), which implies that their integer lifts are also equal. That is, \(\nu = \mu\), proving the claim. Since the weight of each \(\nu\in E\) is  at most \(\ell\), \(\{\pi(\nu)|\nu \in E\} \subseteq \ell^\le A^\beta\). Since the number of weight at most \(\ell\) error vectors is \( \binom{m+\ell}{\ell}\), we have \(|\ell^\le A^\beta|\ge \binom{m+\ell}{\ell}\) and the lemma follows. 
\qed
\end{proof}
\begin{remark}\label{remark-codes}
In light of lemma \ref{lemma-elusive-code}, explicit binary linear codes of length \(m\), dimension \(n'\) and minimum distance \(2\ell+1\) yield explicit  \((m,n',\ell,\binom{m+\ell}{\ell})\)-expanders.  
To control the degree of the resulting elusive functions (and the explicit polynomials), we look to parity check matrices of expander codes built from unbalanced bipartite expander graphs \cite{ss-expander}. Consider a right \(c\)-regular bipartite graph with \(n'\) parity constraints on the left and \(m\) coordinates/variables on the right. Take its \(n'\times m\) left-right adjacency matrix  as the parity check matrix.
\begin{figure}[H]
    \centering
    \begin{tikzpicture}[thick,scale=1, every node/.style={transform shape}]
    \tikzstyle{vertex}=[circle, fill, inner sep=1.25pt]
    \node [vertex] at (9,.85) {};
    \node [right=4pt] at (9,.85) {\(x_{m}\)};
    \node [vertex] at (9,2) {};
    \node [right=4pt] at (9,2) {\(i\)-th coordinate \(x_i\)};
    \node [vertex] at (9,4) {};
    \node [right=4pt] at (9,4) {\(x_2\)};
    \node [vertex] at (9,5) {};
    \node [right=4pt] at (9,5) {\(x_1\)};
    \node at (0,4) {\(\bigoplus\)};
    \node [left=4pt] at (0,4) {\(c_1\)};
    \node  at (0,3.25) {\(\bigoplus\)};
    \node [left=4pt] at (0,3.25) {\(c_2\)};
    \node  at (0,2) {\(\bigoplus\)};
    \node [left=4pt] at (0,2) {\(j\)-th parity check \(c_j\)};
    \node at (0,1.25) {\(\bigoplus\)};
    \node [left=4pt] at (0,1.25) {\(c_{n'}\)};
%
%
%
    \draw (9,2) -- (0,3.25);
    \draw (9,2) -- (0,4);
    \draw (9,2) -- (0,1.25);
    \draw (9,5) -- (0,2);
    \draw (9,4) -- (0,2);
    \draw (9,.85) -- (0,2);
    \draw [red] plot [smooth] coordinates {(8,1.5) (7.75,2) (8,2.5)};
   \node [right=4pt] at (6.5,2.85) {\(c=\) degree of the elusive function};

    \end{tikzpicture}
    \caption{A right \(c\)-regular bipartite graph whose adjacency defines the parity checks.}
    \label{fig:edge-vertex-incidence}
\end{figure} 
If every subset of the right vertices of size at most \(2\ell+1\) has at least \((1-o(1))c(2\ell+1)\) neighbours on the left, then the code that the parity check matrix defines has minimum distance at least \(2\ell+1\) \cite[Thm. 7]{ss-expander}. Such bipartite graphs are called lossless unbalanced expanders. Lossless refers to the expansion factor (1-o(1)) being close to one and unbalanced refers to the number of right vertices \(m\) being much larger than the number of left vertices \(n\).
There are explicit constructions of lossless expanders due to Capalbo, Reingold, Vadhan, and  Wigderson \cite{crvw}, or Guruswami, Umans, and Vadhan \cite{guv}, with the right degree \(c\) polylogarithmic in the imbalance \(m/n'\). The degree of the resulting 
elusive function is \(c\), and the explicit polynomials are of degree two more than \(c\)
 (see proof of theorem \ref{theorem-lower-bound-intro}). For \(m=n^2\) and \(n'=O(n\log n)\), we get explicit polynomials of degree polylogarithmic in \(n\).  We do not write out the details as the adhoc construction in lemma \ref{lemma-construction-main} resulting in theorem \ref{theorem-lower-bound-intro} gives the same superlinearity, but with a smaller (still logarithmic in \(n\)) degree. This connection with codes should be better investigated, to see if it applies to wider parameter ranges. The downside is that the minimum distance demanded in lemma \ref{lemma-elusive-code} is too large, since it is designed to have the maximum possible sumset expansion, without any collisions of error vectors to the same syndrome. List decoding does not seem to help at this high rate regime. A more relaxed/averaged analysis allowing for some syndrome collisions will be interesting. The promising aspect of the error correction code method is that in parameter ranges where it is easy to construct explicit codes, one can construct codes with a parity check matrix of low column weights, to get polylogarithmic degree explicit polynomials. 
\end{remark}
\section{Towards super-linear bounds via matrix rigidity}\label{section-rigidity}
\begin{proof}[\textit{of theorem \ref{theorem-rigidity-intro}}] 
Consider an \((n^2,n',n^3,n^{8s})\)-expander
\[A=(a_{(u,v),j})_{(u,v)\in[n]\times[n],j\in [n']}\]
for a large enough \(n\).
Let \(X=(x_j)_{j\in [n']}\) and \(Y=(y_v)_{v\in[n]}\) be two tuples of variables. 
Associate to \(A\), the symbolic matrix 
\[f_A(X):= \left( \prod_{j\in[n']}x_j^{\alpha_{(u,v),j}}\right)_{(u,v)\in[n]\times[n]} \in \C[x_1,x_2,\ldots,x_{n'}]^{n\times n}\]
and denote by 
\[f_A(X)\times Y := \left(\sum_{v\in[n]} y_v\prod_{j\in[n']}x_j^{\alpha_{(u,v),j}} \right)_{u\in[n]} \in \C[x_1,x_2,\ldots,x_{n'},y_1,y_2,\ldots,y_n]^{n} \]
the symbolic matrix-vector multiplication. 
Since \(A\) is an \((n^2,n',n^3,n^{8s})\)-expander, there exists a big vector \(\beta=(\beta_j)_{j \in [n']} \in \Z_{>0}^{n'}\) such that 
\[A^\beta = \left(\sum_{j\in [n']}\alpha_{(u,v),j}\beta_j\right)_{(u,v) \in [n]\times [n]}\]
expands as  \(|{n^3}^\le A^\beta|\ge n^{s}\).  
For every \(t \in {n^3}^\le A^\beta\), pick an \(n_t\)-tuple  \((i_1(t),i_2(t),\ldots,i_{n_t}(t)) \in [n\times n]^{n_t}\) of matrix indices with \(n_t\le n^3\) such that 
\[\left(\sum_{j\in [n']}\alpha_{i_1(t),j}\beta_j\right)+\left(\sum_{j\in [n']}\alpha_{i_2(t),j}\beta_j\right)+\ldots+\left(\sum_{j\in [n']}\alpha_{i_{n_t}(t),j}\beta_j\right)=t.\] 
Here, \(n_t\) is the number of summands in the chosen sum to hit the target \(t\), which can be made at most \(n^3\) since \(t \in\ {n^3}^\le A^\beta\). %
Let \(\C[w_{(1,1)},w_{(1,2)},\ldots,w_{(n,n)}]\) denote the coordinate ring of \(\C^{n\times n}\).  
For distinct \(t,t^\prime \in\ {n^3}^\le A^\beta\), the monomials \(w_{i_1(t)}w_{i_2(t)}\ldots w_{i_{n_t}(t)} \neq w_{i_1(t^\prime)}w_{i_2(t^\prime)}\ldots w_{i_{n_{t'}}(t^\prime)}\). Therefore the (not necessarily multilinear) monomials 
\[M:=\left\{w_{i_1(t)}w_{i_2(t)}\ldots w_{i_{n_t}(t)},t\in {n^3}^\le A^\beta\right\} \subset \C[w_{(1,1)},w_{(1,2)},\ldots,w_{(n,n)}]\] 
form a basis for the \(|{n^3}^\le A^\beta|\) dimensional \(\C\)-linear subspace \(\mathcal{M}\) of \(\C[w_{(1,1)},w_{(1,2)},\ldots,w_{(n,n)}]\) it spans. 
Consider the  degree at most \(s^4(s+1)\) universal map 
\begin{align*}
    U:\C^{2s}&\longrightarrow\C^{n\times n}\\
    (q_1,q_2,\ldots,q_{2s}) &\longmapsto \left(g_{(u,v)}(q_1,q_2,\ldots,q_{2s})\right)_{(u,v))\in[n] \times [n]}
\end{align*}
(defined in \cite[Lem. 4.1]{vk}) of Volk and Kumar whose image contains all the matrices whose linear transformation can be computed by some linear circuit of size at most \(s\).  
Let \(\iota_U:\mathcal{M}\longrightarrow\C[q_1,q_2,\ldots,q_{2s}]\) be the \(\C\)-linear map 
induced by the substitution \(w_{(u,v)} \longmapsto g_{(u,v)}(q_1,q_2,\ldots,q_{2s})\) for each of the \((u,v)\in [n]\times[n]\) coordinates. That is, for \(t \in {n^3}^\le A^\beta\), the \(t\)-th monomial is mapped as
\[w_{i_1(t)}w_{i_2(t)}\ldots w_{i_{n_t}(t)}  \longmapsto \prod_{j\in[n_t]}\left(g_{i_j(t)}(q_1,q_2,\ldots,q_{2s})\right).\]
Since \(U\) is a degree \(s^4(s+1)\) map and \(\mathcal{M}\) consists of degree at most \(n^3\) polynomials, the image \(\iota_U(\mathcal{M})\) is contained in the \(\binom{2s+s^4(s+1)n^3}{s^4(s+1)n^3}\)-dimensional subspace of degree at most \(s^4(s+1)\) polynomials in \(\C[q_1,q_2,\ldots,q_{2s}]\). 
Since \(A\) is an \((n^2,n',n^3,n^{8s})\)-expander,  
\begin{align*}
    \dim(\mathcal{M}) = | {n^3}^\le A^\beta| \ge n^{8s}>\binom{2s+s^4(s+1)n^3}{2s}= \binom{2s+s^4(s+1)n^3}{s^4(s+1)n^3},
\end{align*}
implying the kernel \(\ker(\iota_\Gamma)\) has dimension at least one, assuring there 
exists a non zero  
\[h_U(w_{(1,1)},w_{(1,2)},\ldots,w_{(n,n)}) = \sum_{t \in n^\le A^\beta} h_t w_{i_1(t)}w_{i_2(t)}\ldots w_{i_{n_t}(t)}  \in ker(\iota_U).\]
As in the proof of theorem \ref{theorem-lower-bound-intro}, for a prime \(p\) greater than the maximum element in \({n^3}^\le A^\beta\), we can show using Chebotarev's theorem on roots of unity that there exists a \(b\in \{0,1,\ldots,p-1\}\) such that 
\[h_U\left(f_A\left(\zeta^{b\beta_1},\zeta^{b\beta_2},\ldots,\zeta^{b\beta_{n'}}\right)\right)\neq 0.\]
Since \(h_U\in\ker(\iota_U)\), there is no size \(s\) linear circuit to compute the linear transformation
 \begin{equation}\label{equation-substitution-polynomial}
     f_A(X)/_{X\longmapsto \left(\zeta^{b\beta_1},\zeta^{b\beta_2},\ldots,\zeta^{b\beta_{n'}}\right)}\times Y \in \C[y_1,y_2,\ldots,y_n]^{n}.
 \end{equation}
Strassen proved  for every circuit computing a set of linear polynomials, there is a linear circuit computing the same polynomials, of size and depth a constant times that of the original circuit \cite{str-73}. In fact the underlying constant can be taken to be one \cite[Prop. 2.5]{raz}. Hence, there is no  circuit (linear or otherwise) of size \(s\) to compute the linear transformation in equation \ref{equation-substitution-polynomial}. Therefore, there is no circuit of size \(s\) to compute the \(n\)-tuple of polynomials  \(f_A(X)\times Y\). Otherwise, specialising  part of the input to the circuit under the substitution \(X\longmapsto \left(\zeta^{b\beta_1},\zeta^{b\beta_2},\ldots,\zeta^{b\beta_{n'}}\right)\) gives a size  \(s\) circuit for the linear transformation in equation \ref{equation-substitution-polynomial}.  
Introduce one last tuple \(Z=(z_u)_{u\in[n]}\) of variables and conclude that there is no circuit of size \(s/5\) to compute the polynomial 
\[\sum_{u\in[n]} z_u \left(f_A(X)\times Y\right)_u =\sum_{u\in[n]} z_u \sum_{v\in[n]} y_v\prod_{j\in[n']}x_j^{\alpha_{(u,v),j}}  \in \C[x_1,\ldots,x_{n'},y_1,\ldots,y_n,z_1,\ldots,z_n],\]
or else, Baur-Strassen computes all its derivatives \(f_A(Y)\times W\) at once in size \(s\) \cite{bs}.  
\qed
\end{proof}
\section{High-rank tensors with roots of unity coordinates}\label{section-tensors}
\subsection{Tensor rank, border rank and annihilators of secant varieties}\label{subsection-tensor-rank}
We refer to \cite{lan-book} for the tensor algebra basics used in this section. For positive integers \(n,d\), consider a tensor \(A\in \underbrace{\p^{n-1}\otimes \p^{n-1} \otimes \ldots \otimes\p^{n-1}}_{d}\) in the tensor product of \((n-1)\)-dimensional complex projective spaces. Fixing a coordinate system \(x^{(i)}=(x^{(i)}_j)_{j\in [n]}\) for the \(i\)-th copy of \(\p^{n-1}\), we can write \(A\) as \[\sum_{(j_1,j_2,\ldots,j_d)\in[n]^d}a_{j_1,j_2,\ldots,j_d}x^{(1)}_{j_1}\otimes x^{(2)}_{j_2}\otimes \ldots\otimes x^{(d)}_{j_d}.\]
We refer to \(a_{j_1,j_2,\ldots,j_d}\)s as the tensor coordinates, which are well defined up to a non zero constant. Therefore, abusing notation, we will also think of \(A\) as given by its coordinates \[A=(a_{j_1,j_2,\ldots,j_d})_{(j_1,j_2,\ldots,j_d)\in[n]^d} \in \p(\C^n\otimes \C^n \otimes \ldots \otimes \C^n).\] 
Rank one tensors are those of form \(u^{(1)}\otimes u^{(2)} \otimes \ldots \otimes u^{(d)}\) for some \((u^{(1)},u^{(2)}, \ldots , u^{(d)}) \in \p^{n-1}\times \p^{n-1} \times \ldots \times \p^{n-1} \). The tensor rank of \(A\) is the smallest number \(r\) such that \(A\) can be written as the sum of \(r\) rank one tensors.   
To define border rank, we look to the coordinate view, where rank one tensors are precisely the ones in the image of the Segre embedding (outer product)
\begin{align*}
    \p^{n-1}\times \p^{n-1} \times \ldots \times \p^{n-1} &\longrightarrow \p(\C^n\otimes \C^n \otimes \ldots \otimes \C^n)\\
    \left(u^{(1)},u^{(2)},\ldots,u^{(d)}\right)&\longmapsto \left(u^{(1)}_{j_1}u^{(2)}_{j_2}\ldots u^{(d)}_{j_d}\right)_{(j_1,j_2,\ldots,j_d)\in[n]^d}.
\end{align*}
This image (consisting of rank one tensors) is called the Segre variety \(X\). 
The \(r\)-th secant variety  
\[\sigma_r(X):= \overline{\bigcup_{(A_1,A_2,\ldots,A_r)\in X^r}\left(\C A_1+\C A_2+\ldots+\C A_r \right)},\]
 is the Zariski closure (denoted by the bar) of tensors of tensor rank at most \(r\). The border rank of a tensor \(A\) is defined as the smallest \(r\) such that \(A \in \sigma_r(X)\). For other equivalent definitions/characterizations of border rank, see \cite[15.4, 20.3]{bcs}. Clearly, the tensor rank of \(A\) is at least the border rank of \(A\). Therefore, to find tensors of high tensor rank, it suffices to find tensors of high border rank. As a projective subvariety of \(\p(\C^n\otimes \C^n \otimes \ldots \otimes \C^n)\), \(\sigma_r(X)\) satisfies homogeneous equations in the tensor coordinates \cite{lm}. If we can find a point outside \(\sigma_r(X)\), then that point corresponds to a tensor of border rank greater than \(r\). We will closely follow the technique of Volk and Kumar \cite{vk} to find such polynomials of small (polynomial in \(n^d\)) degree, and use our method to elude them. 
  Following \cite[\S~5]{vk}, we first write down a universal polynomial map encoding tensor rank \(r\) tensors. We will introduce \(ndr\) variables. The first \(nd\) of these are \(\left(u(1)^{(i)}_{j}\right)_{i\in[d],j\in[n]}\), thought of as encoding a symbolic rank one vector 
 \[\left(u(1)^{(1)}_j\right)_{j\in[n]} \otimes \left(u(1)^{(2)}_j\right)_{j\in[n]} \otimes \ldots \otimes \left(u(1)^{(d)}_j\right)_{j\in[n]}.\]
Likewise, for \(k\in[r]\), the \(k\)-th \(nd\)-tuple of variables \(\left(u(k)^{(i)}_{j}\right)_{i\in[d],j\in[n]}\) encodes the \(k\)-th rank one tensor. By construction, the degree-\(d\) map 
\begin{align}\label{equation-lowranktensormap}
     \phi: \C^{ndr} &\longrightarrow \p^{n-1}\otimes \p^{n-1} \otimes \ldots \otimes\p^{n-1} \\ 
    \left(u(k)^{(i)}_{j}\right)_{i\in[d],j\in[n],k\in[r]}&\longmapsto \sum_{k\in[r]}\left(\left(u(k)^{(1)}_j\right)_{j\in[n]} \otimes \left(u(k)^{(2)}_j\right)_{j\in[n]} \otimes \ldots \otimes \left(u(k)^{(d)}_j\right)_{j\in[n]}\right) \notag
\end{align}
 contains in its image \(\phi(\C^{ndr})\) (viewed as a set of tensor coordinates) all tensor of tensor rank at most \(r\). Let \(h\) be a homogeneous polynomial in the coordinate ring (of the tensor coordinates) that vanishes at \(\phi(\C^{ndr})\). Hence, \(h\) vanishes at the tensors of tensor rank at most \(r\). Since the set of zeroes of \(h\) is Zariski closed, by the minimality of Zariski closure, \(h\) must vanish at \(\sigma_r(X)\) (the Zariski closure of tensors of tensor rank at most \(r\)). Hence, if \(h\) does not vanish at a tensor, then that tensor has border rank greater than \(r\).\\

 Let us pause to estimate the maximum border rank attainable. The secant varieties form a proper chain of varieties \(\sigma_1(X)\subset\sigma_2(X)\subset\ldots\subset \sigma_s(X)\), until \(\sigma_s(X)\) fills the whole space of tensors. Call such an \(s\), which is the border rank such that \(\sigma_s(X)=\p(\C^n\otimes \C^n \otimes \ldots \otimes \C^n)\) but \(\sigma_{s-1}(X) \subset \p(\C^n\otimes \C^n \otimes \ldots \otimes \C^n)\), as the saturating border rank. Catalisano, Geramita and Gimigliano \cite[Rem 1.4]{cgg} showed that the saturating border rank is at least \(s \ge n^d/((n-1)d+1\), while Abo, Ottaviani and Peterson \cite[Thm. 5.2]{aop} showed that it is at most (for \(d>2\))
 \[s \ge \left\lfloor\frac{n^d}{(n-1)d+1}\right\rfloor + n - \left(\left\lfloor\frac{n^d}{(n-1)d+1}\right\rfloor \bmod n\right),\]
 where the last term is meant to be the remainder in \(\{0,1,\ldots,n-1\}\) of \(\lfloor n^d/((n-1)d+1) \rfloor\) divided by \(n\). We will choose \(r=\lfloor n^{d-1}/2d\rfloor \) as the target border rank and find a point outside \(\sigma_r(X)\), to get a tensor of border rank at worst a factor of half off the maximum possible. Further, for \(d=\Omega(\log n/\log\log n)\), the border rank  \(n^{d-1}/2d=n^{d(1-o(1))}\) is high enough for Raz's arithmetic formula lower bounds to apply.
\subsection{Semi-explicit high-rank tensors}\label{subsection-semiexplicit}
\begin{proof}[of theorem \ref{theorem-highrank-tensor-intro}]
    For ease of notation, flatten the \(n^d\) coordinates of \(\p(\C^n\otimes \C^n \otimes \ldots \otimes \C^n)\), so that tensor coordinates are indexed by \([n^d]\), say by the lexicographic ordering \([n]^d\longrightarrow[n^d]\). Let \(f:\C\longrightarrow\C^{n^d}\) denote the map \[z\longmapsto (z^{n^{2di}})_{i\in[n^d]}\] defining the coordinates of the symbolic tensor \((z^{n^{2di}})_{i\in[n^d]} \in \C[z]^{n^d}\).
    Let \(\C\left[x_i,i \in [n^d]\right]\) denote the homogeneous coordinate ring of the flattening of \(\p(\C^n\otimes \C^n \otimes \ldots \otimes \C^n)\). In a departure from the previous proofs, we work with its subring \(\Q\left[x_i,i \in [n^d]\right]\). Let \(\mathcal{M}\) denote the \(\Q\)-linear subspace of \(\Q\left[x_i,i \in [n^d]\right]\) generated by the basis  
    \[M:=\left\{x_1^{a_1}x_2^{a_2}\ldots x_{n^d}^{a_{n^d}} \middle| \ a=(a_1,a_2,\ldots,a_{n^d})\in \Z_{\ge 0}^{n^d}, \sum_{i\in[n^d]}a_i = n^{2d}\right\}\]
    of monomials of degree \(n^{2d}\), indexed by number strings \(a\) that sum to \(n^{2d}\). By the uniqueness of \(n^{2d}\)-ary expansion, distinct \(a,a'\) correspond to distinct   
    \(x_1^{a_1}x_2^{a_2}\ldots x_{n^d}^{a_{n^d}}\), \(x_1^{a'_1}x_2^{a'_2}\ldots x_{n^d}^{a'_{n^d}}\), implying \(\dim(\mathcal{M})=  \dbinom{n^{2d}+n^d-1}{n^{2d}}=  \dbinom{n^{2d}+n^d-1}{n^d-1}\).  
    Set 
    \(r=:\lfloor n^{d-1}/2d \rfloor \) and 
    recall the degree \(d\) universal map from equation \ref{equation-lowranktensormap} 
    \begin{align*}
        \phi:\C^{ndr} &\longrightarrow \p(\C^{n^d})\\
        (z_1,z_2,\ldots,z_{ndr}) &\longmapsto (\phi_i(z_1,z_2,\ldots,z_{ndr}))_{i\in[n^d]},
    \end{align*}
    whose image \(\phi(\C^{ndr})\) contains all rank at most \(ndr\) tensors. For \(i\in[n^d]\), we note that \( \phi_i\in \Q\left[x_i,i \in [n^d]\right]\), as its coefficients are all zero or one. 
    Let \(\iota:\mathcal{M}\longrightarrow\Q[z_1,z_2,\ldots,z_{ndr}]\) be the \(\Q\)-linear map 
induced by the substitution \(x_i \longmapsto \phi_i(z_1,z_2,\ldots,z_{ndr})\) for each of the \(i\in [n^d]\) coordinates. That is, 
the \(a\)-th monomial is mapped to
\[x_1^{a_1}x_2^{a_2}\ldots x_{n^d}^{a_{n^d}}  \longmapsto \prod_{i\in[n^d]}\left(u_i(z_1,z_2,\ldots,z_{ndr})\right)^{a_i}.\]
Since \(\phi\) is a degree-\(d\) map and \(\mathcal{M}\) consists of degree at most \(n^{2d}\) polynomials, the image \(\iota(\mathcal{M})\) is contained in the subspace of degree at most \(dn^{2d}\) polynomials in \(\Q[z_1,z_2,\ldots,z_{ndr}]\). The latter is a subspace of dimension 
\[ \dbinom{ndr+dn^{2d}}{ndr} < \left(\frac{en^{d+1}}{2}\right)^{\frac{n^d}{2}} \ll n^{d(n^{d}-1)} \le \dbinom{n^{2d}+n^d-1}{n^d-1} = \dim(\mathcal{M}),\]
where the \(``\ll''\) holds for large enough \(n^d\).  
Hence, there is a non zero homogeneous \[h= \sum_{a\mid \sum_{i\in[n^d]}a_i = n^{2d}} h_ax_1^{a_1}x_2^{a_2}\ldots x_{n^d}^{a_{n^d}} \in \ker(\iota),\] vanishing at the secant variety \(\sigma_{\lfloor n^{d-1}/2d\rfloor}(X)\) consisting of all tensors of border rank at most \(n^{d-1}/2d\). Let \(p> n^{2d(n^d+1)}\) be a prime and fix a primitive \(p\)-th root of unity \(\zeta\in \C\).  
As before (similar to the proof of theorem \ref{theorem-elusive-curve-intro}), there exists a \(b\in \{0,1,\ldots,p-1\}\) such that \(h(f(\zeta^b))\ne 0\), for otherwise Chebotarev's theorem on roots of unity is contradicted. 
Consider the field homomorphism \(\sigma:\Q(\zeta)\longrightarrow\Q(\zeta^b)\) taking \(\zeta \longmapsto \zeta^b\) and fixing \(\Q\). Since \(h \in \Q\left[x_i,i \in [n^d]\right]\), \[\sigma(h(f(\zeta)))=h(\sigma(f(\zeta))) = h(f(\zeta^b))\neq 0 \Rightarrow h(f(\zeta))\ne 0.\]
Hence the tensor with coordinates \((\zeta^{n^{2di}})_{i\in[n^d]}\) has border rank at least \((n^{d-1})/2d\). The proof holds even if the tensor coordinates are permuted, proving the theorem.
\qed
\end{proof}
\begin{remark}\label{remark-highrank-tensor-explicit}
    Finding explicit tensors of high rank is a notoriously difficult problem, even restricted to three dimensions. Part of the difficulty is that tensor ranks are NP-hard to compute \cite{has}. Further, the polynomials whose vanishing defines the secant variety (consisting of tensors of a bounded border rank) are intricate and difficult to study \cite{lm}, preventing techniques that try to find a point outside them.  In three dimensions, for \(n\times n \times n\) cubical formats, the typical tensor rank (or border rank) is \(\lfloor n^3/(3n-2) \rfloor = \Theta(n^2)\) \cite{lic}, but no explicit tensors of super-linear tensor rank \(\omega(n)\) are known. See \cite{amt,lan,ml} and the references therein for the best known constructions. In \(d\)-dimensional \(n\times n \times \ldots \times n\) formats, it is easy to construct tensors of rank  \(n^{\lfloor d/2\rfloor}\), by embedding a non singular \(n^{\lfloor d/2\rfloor} \times n^{\lfloor d/2\rfloor}\) matrix into a tensor \cite{raz-ten}. An explicit construction nearly doubling this bound is in \cite{amt}, beyond which no constructions are known. These fall well short of the \(n^{d(1-o(1))}\) rank demanded by arithmetic formula lower bounds. 
\end{remark}
Consider the semi-explicit construction of high rank tensors in \(d\)-dimensional \(n\times n \times \ldots \times n\) formats. To compare the construction in theorem \ref{theorem-highrank-tensor-intro} to the state of the art, we first observe that results of Volk-Kumar \cite{vk} and Kumar-Lokam-Patankar-Sharma \cite{klps} combine to give semi-explicit high border rank tensors with roots of unity coordinates.  
\begin{theorem}[Volk-Kumar+Kumar-Lokam-Patankar-Sharma]\label{theorem-tensor-rank-other}
    Let \(n,d\) be positive integers. Let \((p_{j_1,j_2,\ldots,j_d})_{(j_1,j_2,\ldots,j_d)\in[n]^d}\) be a tensor of distinct prime coordinates, each prime greater than \(n^{2d}\). Let \(\zeta_{p_{j_1,j_2,\ldots,j_d}}\in \C\) be a primitive \(p_{j_1,j_2,\ldots,j_d}\)-th root of unity. For large enough \(n^d\), the border rank of \(\left(\zeta_{p_{j_1,j_2,\ldots,j_d}}\right)_{(j_1,j_2,\ldots,j_d)\in[n]^d}\) is at least  \(n^{d-1}/2d\).
\end{theorem}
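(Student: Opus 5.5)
The plan is to follow the template of Volk and Kumar: first produce a nonzero annihilator of \(\sigma_r(X)\) with integer coefficients and controlled degree, and then show it cannot vanish at the tensor \(\left(\zeta_{p_{j_1,\ldots,j_d}}\right)\). The second step replaces the Chebotarev argument used in theorem \ref{theorem-highrank-tensor-intro} by the algebraic independence statement of Kumar--Lokam--Patankar--Sharma for roots of unity of distinct prime orders.

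First set \(r:=\lfloor n^{d-1}/2d\rfloor\) and use the degree-\(d\) universal map \(\phi:\C^{ndr}\to\C^{n^d}\) from equation \ref{equation-lowranktensormap}; its coordinate polynomials have \(0/1\) coefficients. Let \(\mathcal{M}_D\) be the \(\Q\)-span of all monomials of degree exactly \(D\) in the \(n^d\) flattened tensor coordinates, and take the \(\Q\)-linear substitution map \(\iota:\mathcal{M}_D\to\Q[z_1,\ldots,z_{ndr}]_{\le dD}\). Choose \(D\) so that
\[\binom{D+n^d-1}{n^d-1} > \binom{ndr+dD}{ndr};\]
for instance \(D=n^{2d}\), as in the proof of theorem \ref{theorem-highrank-tensor-intro}. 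The same estimate shows the inequality holds for large \(n^d\): the left side is at least \(n^{d(n^d-1)}\), while the right side is at most \((en^{d+1}/2)^{n^d/2}\). It is worth recording that, since \(ndr\le n^d/2\), any \(D\) that is polynomial in \(n^d\) with exponent larger than about one already suffices.

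The kernel of \(\iota\) is defined over \(\Q\), so it contains a nonzero homogeneous \(h\) with rational coefficients; clearing denominators, we may take integer coefficients. Then \(h\) vanishes on \(\phi(\C^{ndr})\). By Zariski closure, exactly as argued in \S~\ref{subsection-tensor-rank}, \(h\) vanishes on all of \(\sigma_r(X)\). The degree of \(h\) is \(D\le n^{2d}\), so every variable appears in each monomial with exponent at most \(n^{2d}\), which is smaller than every prime \(p_{j_1,\ldots,j_d}\).

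The key step, and the main obstacle, is to show \(h\left(\left(\zeta_{p_{\mathbf j}}\right)_{\mathbf j}\right)\ne 0\). I would invoke the Kumar--Lokam--Patankar--Sharma result: if \(p_1,\ldots,p_N\) are distinct primes and \(h\) is a nonzero integer polynomial whose degree in each variable is less than \(p_i-1\), then \(h(\zeta_{p_1},\ldots,\zeta_{p_N})\ne 0\). Their proof rests on linear disjointness of the cyclotomic fields \(\Q(\zeta_{p_i})\), whose compositum has degree \(\prod_i(p_i-1)\), together with \(\{1,\zeta_p,\ldots,\zeta_p^{p-2}\}\) being a basis of \(\Q(\zeta_p)\). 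The hypothesis is met because each individual degree of \(h\) is at most \(n^{2d}\), and the primes exceed \(n^{2d}\) (assuming the strict bound with room to spare, or else reducing the exponent \(p-1\) via \(1+\zeta+\cdots+\zeta^{p-1}=0\)). The care needed is precisely in matching this per-variable degree bound; this is where I expect the details to concentrate.

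Granting that, \(h\) does not vanish at the given tensor, so the tensor lies outside \(\sigma_r(X)\). Its border rank therefore exceeds \(r\), giving a border rank of at least \(n^{d-1}/2d\) for large \(n^d\).
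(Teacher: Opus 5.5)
Your proposal is correct and follows essentially the paper's proof. Like the paper, you build a nonzero rational (hence integer) homogeneous annihilator of \(\sigma_r(X)\) of degree \(n^{2d}\) from the universal map \(\phi\), using the dimension count of theorem \ref{theorem-highrank-tensor-intro} (the paper attributes this to Volk--Kumar's Thm.~1.4), and then apply \cite[Lem. 11]{klps} on nonvanishing at roots of unity of distinct prime orders. The paper glosses over the edge case \(p=n^{2d}+1\), and your fix for it works: reducing the pure power \(x_i^{n^{2d}}\) via \(1+\zeta+\cdots+\zeta^{p-1}=0\) leaves a nonzero constant term, so the reduced polynomial is nonzero with all individual degrees at most \(p_j-2\).
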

\begin{proof}
Consider \(d\)-dimensional \(n\times n \times \ldots \times n\) format tensors. 
    By \cite[Thm 1.4]{vk}, (with minor adjustments in parameters) there is a non-zero  polynomial \(h\) in the coordinate ring of the tensor of degree at most \(n^{2d}\), vanishing at border rank at most \(n^{d-1}/2d\) tensors. Further, (as evident from equation \ref{equation-lowranktensormap} and proof of theorem \ref{theorem-highrank-tensor-intro}) there exists such a polynomial \(h\) with rational coefficients. The theorem follows by applying \cite[Lem. 11]{klps} with the roots of unity \(\left(\zeta_{p_{j_1,j_2,\ldots,j_d}}\right)_{(j_1,j_2,\ldots,j_d)\in[n]^d}\) and the polynomial \(h\). 
    \qed
\end{proof}

\begin{remark}\label{remark-highrank-tensor}
An easy way to construct a semi-explicit high-rank tensor is to have algebraically independent numbers as coordinates: by definition they cannot satisfy a non zero polynomial (that vanishes on the secant variety of low border rank) over \(\Q\). We will not consider these as viable  avenues towards explicit constructions. 
Another way is to have entries that are sufficiently exponentially increasing integers. We will want constructions where this growth is as small as possible. 
Therefore we will restrict our attention to coordinates that are in an algebraic number field and small (in some notion of height), in the spirit of Strassen \cite{str-rational}. 
We will gauge the semi-explicitness of a tensor \((a_{j_1,j_2,\ldots,j_d})_{(j_1,j_2,\ldots,j_d)\in[n]^d} \in \p(\C^n\otimes \C^n \otimes \ldots \otimes \C^n)\) by the degree \([\Q\left((a_{j_1,j_2,\ldots,j_d})_{(j_1,j_2,\ldots,j_d)\in[n]^d)}\right):\Q]\) and the maximum (absolute Weil) height of the coordinates, wanting both to be small. 
Both theorems \ref{theorem-tensor-rank-other} and \ref{theorem-highrank-tensor-intro} give tensor coordinates of height one (since they are roots of unity), which is the best possible. 
Theorem \ref{theorem-tensor-rank-other} gives coordinates that are roots of unity in a cyclotomic field of degree  
\[\left[\Q\left(\left(\zeta_{p_{j_1,j_2,\ldots,j_d}}\right)_{(j_1,j_2,\ldots,j_d)\in[n]^d}\right):\Q\right] = \prod_{(j_1,j_2,\ldots,j_d)\in[n]^d}p_{j_1,j_2,\ldots,j_d}=\Omega\left( n^{2dn^d}(2d\log n)^{n^d}\right),\]
where the bound on the right follows from the prime number theorem. 
Theorem \ref{theorem-highrank-tensor-intro} gives tensor coordinates in a cyclotomic extension of degree \([\Q(\zeta_p):\Q]\) for a prime \(p>n^{2d(n^d+1)}\), which is \(O\left(n^{2d(n^d+1)}2dn^d\log n\right)\) by the prime number theorem for choosing the smallest such prime. Hence, theorem \ref{theorem-highrank-tensor-intro} is over a smaller degree field, by roughly a \(e^{n^d \log n^{2d} }\) multiplicative factor.
\end{remark}
%
%
%
\begin{remark}
     One advantage of theorem \ref{theorem-tensor-rank-other} (over theorem \ref{theorem-highrank-tensor-intro}) is that it involves exponentially smaller primes, even if there are many. We can lower the size of the prime in theorem \ref{theorem-highrank-tensor-intro} as follows. However, the reduction in the size of the primes involved is not significant. Let \(p_1=2\). Inductively, let \(p_j\) be the smallest prime greater than \(P_j^{P_j\varphi(P_j)}\), where \(P_j:=p_1p_2\ldots p_{j-1}\) and \(\varphi\) is the Euler totient. 
     Let \(N\) be the smallest number of the following form \(N=p_1p_2\ldots p_j\) greater than \(n^{2d(n^d+1)}\). Take a primitive \(N\)-th root of unity in theorem \ref{theorem-highrank-tensor-intro}, in place of the \(p\)-th root of unity. In the proof, recast Cheboratev's theorem with \cite[Thm. 1.4]{clmp}. 
\end{remark}
%
%
%

%
%
\bibliographystyle{splncs04}
\bibliography{references}
%




\end{document}